\newcommand{\ignore}[1]{}
\renewcommand{\Pr}{{\bf Pr}}
\newcommand{\E}{{\bf E}}
\newcommand{\A}{{\cal A}}
\newcommand{\OO}{{\cal O}}
\begin{document}
\title{Adaptive Group Testing Algorithms to \\Estimate the Number of Defectives}
\author{{\bf Nader H. Bshouty}\\ Dept. of Computer Science\\ Technion,  Haifa, 32000\\
{\bf Vivian E. Bshouty-Hurani} \\  The Arab Orthodox Collage \\  Haifa\\
{\bf George Haddad}  \\  The Arab Orthodox Collage \\ Grade 10\\ Haifa\\
{\bf Thomas Hashem} \\ Sister of St. Joseph School \\
Grade 11\\ Nazareth\\
{\bf Fadi Khoury}\\ Sister of Nazareth High School\\ Grade 10\\ P.O.B. 9422, Haifa, 35661\\
{\bf Omar Sharafy}\\ The Arab Orthodox Collage \\ Grade 10\\Haifa
}
\institute{}
\maketitle

\begin{abstract}
We study the problem of estimating the number of defective items in adaptive Group testing by using a minimum number of queries. We improve the existing algorithm and prove a lower bound that shows that, for constant estimation, the number of tests in our algorithm is optimal.
\end{abstract}
\section{Introduction}
Let $X$ be a set of items, among which some are defective, denoted as $I \subseteq X$. In group testing, we test (or query) a subset $Q \subset X$ of items. The response to the query is '1' if $Q$ contains at least one defective item, i.e., $Q \cap I \neq \emptyset$, and '0' otherwise.

Group testing was initially introduced as a method for economical mass blood testing \cite{D43}. Since then, its applicability has extended to various problems, such as DNA library screening \cite{ND00}, quality control in product testing \cite{SG59}, file searching in storage systems \cite{KS64}, sequential screening of experimental variables \cite{L62}, developing efficient contention resolution algorithms for multiple-access communication \cite{KS64,W85}, data compression \cite{HL02}, and computation in the data stream model \cite{CM05}. For a brief history and additional applications, see \cite{Ci13,DH00,DH06,H72,MP04,ND00} and the references therein.

Estimating the number of defective items, $|I|$, within a multiplicative factor of $1 \pm \epsilon$ has been studied in various works \cite{ChengX14,DamaschkeM10,DamaschkeM10b,FalahatgarJOPS16,RonT14}. This estimation is crucial in biological and medical applications \cite{ChenS90,Swallow85}. For instance, it is used to determine the proportion of leafhoppers capable of transmitting the aster-yellows virus in their natural population \cite{Thompson62}, to estimate the infection rate of the yellow-fever virus in mosquito populations \cite{WalterHB80}, and to assess the prevalence of rare diseases using grouped samples, which helps in maintaining individual anonymity \cite{GatwirthH89}.

In the {\it adaptive algorithm}, the tests can depend on the answers to the previous ones. In the {\it non-adaptive algorithm}, they are independent of the previous one; therefore, all the tests can be done in one parallel step.

In this paper, we explore the problem of estimating the number of defective items, denoted as $|I|$, up to a multiplicative factor of $1 \pm \epsilon$ using adaptive group testing algorithms. We first present new lower bounds and then introduce algorithms that enhance the results found in existing literature. Our lower bounds demonstrate the optimality of our algorithms.

\subsection{Previous and New Results}\label{sectionTable}
Let $X$ be a set of $n$ items with a subset of defective items $I$.
Estimating the number of defective items, $|I|=d$, up to a multiplicative factor of $1\pm \epsilon$ has been studied in~\cite{ChengX14,DamaschkeM10,DamaschkeM10b,FalahatgarJOPS16,RonT14}. The most efficient algorithm to date is that of Falahatgar et al. \cite{FalahatgarJOPS16}. They presented a randomized algorithm that asks $2\log\log d+O((1/\epsilon^2)\log(1/\delta))$ {\it expected number of queries} and, with probability at least $1-\delta$, returns an estimation of $d$ within a multiplicative factor of $1 \pm \epsilon$. They also established a lower bound of $(1-\delta)\log\log d$. We show that, with certain modifications to their algorithm, one can get the same result with $(1-\delta)\log\log d+O((1/\epsilon^2)\log(1/\delta))$ expected number of queries. We further establish a lower bound of $(1-\delta)\log \log d+(1/\epsilon)\log(1/\delta)$ for the number of queries. This indicates that our algorithm is optimal for constant $\epsilon$.

While improvements in constant factors in group testing algorithms may seem minor at first glance, they are, in fact, of paramount importance in this field. This is because, in many applications, queries are incredibly costly and time-consuming.

The randomized algorithms mentioned above are not Monte Carlo algorithms. They are characterized by their expected query complexity. We further investigate randomized Monte Carlo, deterministic, and randomized Las Vegas randomized algorithms for this problem. For the randomized Monte Carlo algorithms, we establish a lower bound of $\log\log d + \frac{1}{\epsilon}\log\left(\frac{1}{\delta}\right)$. Subsequently, we present an algorithm that requires $\log\log n + O\left(\frac{1}{\epsilon^2}\log\left(\frac{1}{\delta}\right)\right)$ queries.

For both deterministic and randomized Las Vegas algorithms, we prove the lower bound of $d\log \left(\frac{(1-\epsilon)n}{d}\right)$. Subsequently, we introduce a deterministic algorithm whose number of queries matches this lower bound.

All the algorithms mentioned above run in linear time with respect to $n$. The table below summarizes our results:

\begin{center}
\begin{tabular}{l||c|c|}
Adaptive Algorithm & Upper Bound & Lower Bound  \\
 \hline\hline
Deterministic  & $d\log\frac{(1-\epsilon)n}{d}$ & $d\log\frac{(1-\epsilon)n}{d}$\\
\hline
Randomized Las Vegas  & $d\log\frac{(1-\epsilon)n}{d}$ & $d\log\frac{(1-\epsilon)n}{d}$\\
\hline
Randomized Monte Carlo  & $\log\log n+O\left(\frac{1}{\epsilon^2}\log\frac{1}{\delta}\right)$ & $\log\log d+\Omega\left(\frac{1}{\epsilon}\log\frac{1}{\delta}\right)$\\
\hline
Randomized Monte Carlo  &$(1-\delta)\log\log d+$  & $(1-\delta)\log\log d+$\\
With Expected $\#$Queries   & $O\left(\frac{1}{\epsilon^2}\log\frac{1}{\delta}\right)$ & $\Omega\left(\frac{1}{\epsilon}\log\frac{1}{\delta}\right)$\\
\hline
\hline
\end{tabular}
\end{center}

All the algorithms in this paper are {\it adaptive}. That is, the tests can depend on the answers to the previous ones. For studies on non-adaptive algorithms, refer to~\cite{DamaschkeM10,DamaschkeM10b}. For an algorithm that accurately determines the number of defective items, see~\cite{Cheng11}. The most efficient adaptive algorithm for identifying the defective items requires $d\log(n/d)+O(d)$ queries~\cite{ChengDX14,ChengDZ15,SchlaghoffT05}. This query complexity matches the information-theoretic lower bound applicable to any deterministic or randomized algorithm.

\section{Definitions and Preliminary Results}
In this section, we give some notations, definitions, the type of algorithms that are used in the literature, and some preliminary results.
\subsection{Notations and Definitions}
Let $X=[n]:=\{1,2,3,,\ldots,n\}$ be a set of {\it items} with some {\it defective} items $I\subseteq [n]$. In Group testing, we {\it query} a subset $Q\subseteq X$ of items, and the answer to the query is $Q(I):=1$ if $Q$ contains at least one defective item, i.e., $Q\cap I\not=\O$, and $Q(I):=0$, otherwise.

Let $I\subseteq [n]$ be the set of defective items. Let $\OO_I$ be an {\it oracle} that for a query $Q\subseteq [n]$ returns $Q(I)$. Let $A$ be an algorithm with access to the oracle $\OO_I$. The output of the algorithm $A$ for an oracle $\OO_I$ is denoted by $A(\OO_I)$. When the algorithm is randomized, we add the random seed $r$ as an input to $A$, and then the output of the algorithm is a random variable $A(\OO_I,r)$ in $[n]$. Let $A$ be a randomized algorithm and $r_0$ be a seed. We denote by $A(r_0)$ the deterministic algorithm that is equivalent to the algorithm $A$ with the seed $r_0$. We denote by $Q(A,\OO_I)$ (resp., $Q(A(r),\OO_I)$) the set of queries that $A$ asks with oracle $\OO_I$ (resp., and a seed $r$). The algorithms we consider in this paper output $A(\OO_I,r)\in [|I|(1-\epsilon),|I|(1+\epsilon)]$ where $[a,b]=\{\lceil a\rceil ,\lceil a\rceil+1,\cdots,\lfloor b\rfloor\}$. Such algorithms are called algorithms {\it that estimate the number of defective items $|I|$ up to a multiplicative factor of $1\pm \epsilon$}.

\subsection{Type of Algorithms}
In this paper, we consider four types of algorithms that estimate the number of defective items $|I|$ up to a multiplicative factor of $1\pm \epsilon$.
\begin{enumerate}
\item The {\it deterministic} algorithm $A$ with an oracle $\OO_I$, $I\subseteq X$. The query complexity of a deterministic algorithm $A$ is the {\it worst case complexity}, i.e, $\max_{|I|\le d}|Q(A,\OO_I)|$.
\item The randomized Las Vegas algorithm. We say that a randomized algorithm $A$ is a {\it randomized Las Vegas algorithm that has expected query complexity} $g(n,d)$ if for any $d\in [n]$ and any $I\subseteq X$, $|I|\le d$, algorithm $A$ with an oracle $\OO_I$ asks at most $g(n,d)$ expected number of queries and with probability $1$ outputs an integer in $[|I|(1-\epsilon),|I|(1+\epsilon)]$.
\item The randomized Monte Carlo algorithm.
We say that a randomized algorithm $A$ is a {\it randomized Monte Carlo algorithm that has query complexity} $g(n,d,\delta)$ if for any $d\in [n]$ and any $I\subseteq X$, $|I|\le d$, the algorithm $A$ with an oracle $\OO_I$ asks at most $g(n,d,\delta)$ queries and with probability at least $1-\delta$ outputs an integer in~$[|I|(1-\epsilon),|I|(1+\epsilon)]$.
\item The randomized Monte Carlo algorithm with expected complexity. We say that a randomized algorithm $A$ is {\it a randomized Monte Carlo algorithm with expected complexity that has expected query complexity} $g(d,\delta)$ if, for any $d\in [n]$ and any $I\subseteq X$, $|I|\le d$, the algorithm $A$ asks $g(n,d,\delta)$ expected number of queries and with probability at least $1-\delta$ outputs an integer in $[|I|(1-\epsilon),|I|(1+\epsilon)]$.
\end{enumerate}

\subsection{Preliminary Results}
We now provide a few results that will be used throughout the paper

Let $s\in \cup_{i=0}^\infty\{0,1\}^i$ be a {\it string} over $\{0,1\}$ (including the empty string $\lambda\in \{0,1\}^0$). We denote by $|s|$ the {\it length} of $s$, i.e., the integer $m$ such that $s\in \{0,1\}^m$. Let $s_1,s_2\in \cup_{i=0}^\infty\{0,1\}^i$ be two strings over $\{0,1\}$ of lengths $m_1$ and $m_2$, respectively. We say that $s_1$ is a (proper) {\it prefix} of $s_2$ if $m_1< m_2$ and $s_{1,i}=s_{2,i}$ for all $i=1,\ldots,m_1$. We denote by $s_1\cdot s_2$ the {\it concatenation} of the two strings $s_1$ and $s_2$.

We now prove
\begin{lemma}\label{string1}\label{string2} Let $S=\{s_1,\ldots ,s_N\}$ be a set of $N$ distinct strings over $\{0,1\}$ such that no string is a prefix of another. Then, over the uniform distribution,
$$\max_{s\in S}|s|\ge E(S):=\E_{s\in S}[|s|]\ge \log N.$$
\end{lemma}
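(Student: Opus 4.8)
The statement has two parts. The inequality $\max_{s\in S}|s|\ge \E_{s\in S}[|s|]$ is immediate, since the maximum of any finite collection of reals is at least their arithmetic mean; all the content lies in the bound $E(S)\ge \log N$. Writing $\ell_i=|s_i|$, this is equivalent to $\sum_{i=1}^N \ell_i\ge N\log N$, and the plan is to prove it in two steps: first establish Kraft's inequality $\sum_{i=1}^N 2^{-\ell_i}\le 1$ for prefix-free sets, and then convert it into the desired lower bound on the average length via convexity.

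For Kraft's inequality I would interpret each string combinatorially. Let $L=\max_i \ell_i$ and consider the complete binary tree of depth $L$, whose $2^L$ leaves are the strings of $\{0,1\}^L$. Each $s_i$ sits at a node of depth $\ell_i$, and the set of depth-$L$ leaves having $s_i$ as a prefix has size exactly $2^{L-\ell_i}$. Because no $s_i$ is a prefix of another, no leaf can lie below two distinct codewords, so these leaf-sets are pairwise disjoint; counting leaves gives $\sum_i 2^{L-\ell_i}\le 2^L$, i.e. $\sum_i 2^{-\ell_i}\le 1$. (Equivalently, the events ``$s_i$ is a prefix of a uniformly random infinite binary string'' are disjoint and have probability $2^{-\ell_i}$, so their probabilities sum to at most $1$.)

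To finish, I would apply Jensen's inequality to the convex function $x\mapsto 2^{-x}$. With $\bar\ell=\frac{1}{N}\sum_i \ell_i=E(S)$, convexity gives $2^{-\bar\ell}\le \frac{1}{N}\sum_i 2^{-\ell_i}$, and Kraft bounds the right-hand side by $1/N$. Hence $2^{-E(S)}\le 1/N$, and taking logarithms yields $E(S)\ge \log N$, which combined with the trivial first part completes the proof.

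The argument is entirely standard, so I do not expect a genuine obstacle; the one point that requires care is the disjointness step in Kraft's inequality, which is exactly where the prefix-free hypothesis enters. I would make sure to invoke ``no string is a prefix of another'' explicitly there, rather than the mere distinctness of the $s_i$, since distinctness alone is not enough to guarantee the disjointness of the leaf-sets (and hence the bound $\sum_i 2^{-\ell_i}\le 1$).
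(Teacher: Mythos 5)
Your proof is correct, but it takes a genuinely different route from the paper. The paper argues by induction on $N$: it strips off the longest common prefix $w$ of all strings in $S$, splits $S$ into the two nonempty subsets $S_0,S_1$ according to the bit following $w$, applies the induction hypothesis to each, and closes with the elementary estimate $p\log p+(1-p)\log(1-p)\ge -1$ in the identity $E(S)=|w|+1+\frac{N_0E(S_0)+N_1E(S_1)}{N}$. You instead prove Kraft's inequality $\sum_i 2^{-\ell_i}\le 1$ directly (via the disjoint leaf-sets in the depth-$L$ complete binary tree, correctly flagging that prefix-freeness, not mere distinctness, is what gives disjointness) and then convert it into $E(S)\ge\log N$ by Jensen applied to the convex map $x\mapsto 2^{-x}$; both steps are sound, including the edge case $S=\{\lambda\}$. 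The two arguments have the same underlying convexity content --- the paper's $H(p)\le 1$ bound is the binary-entropy avatar of your Jensen step --- but your route is non-inductive and modular, and it generalizes immediately to non-uniform distributions on $S$ (yielding the noiseless-coding bound $\E[|s|]\ge H$), while the paper's induction is self-contained and avoids invoking Kraft and Jensen as named tools, at the cost of the bookkeeping around the common prefix $w$ and the verification $N_0,N_1>0$.
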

\begin{proof}The proof is by induction on $N$. For $N=1$ the set $S$ with the smallest $E(S)$ is when $S=\{\lambda\}$ and $E(S)=0=\log N$. For $N=2$ the smallest $E(S)$ is when $S=\{0,1\}$ and $E(S)=1=\log N$.
Therefore, the statement of the lemma is true for $N=1,2$.

Consider a set $S$ of size $N>2$. Obviously, $\lambda\not\in S$. Let $w\in \cup_{i=0}^\infty\{0,1\}^i$ be the longest string that is a prefix of all the strings in $S$. For $\sigma\in\{0,1\}$, let $S_\sigma=\{u\ |\ w\cdot \sigma\cdot u\in S\}$. Let $N_\sigma=|S_\sigma|$ for $\sigma\in\{0,1\}$. Obviously, $N_0+N_1=N$ and for each $\sigma\in \{0,1\}$, no string in $S_\sigma$ is a prefix of another (in $S_\sigma$). Also, $N_0,N_1>0$, because otherwise, either $w$ is not the longest common prefix of all the strings in $S$ or $w\in S$ is a prefix of another string in $S$. Let $p=N_0/N$. By the definition of $E(S)$ and the induction hypothesis
\begin{eqnarray*}
E(S)&=&|w|+1+\frac{N_0E(S_0)+N_1E(S_1)}{N}\\
&\ge& 1+\frac{N_0\log(N_0)+N_1\log(N_1)}{N}\\
&=& 1+\log(N)+p\log p+(1-p)\log(1-p)\ge \log(N).\qed
\end{eqnarray*}
\end{proof}

\begin{lemma}\label{Trivial} Let $A$ be a deterministic adaptive algorithm that asks queries and outputs an element in $[n]$. Let $I,J\subseteq X$. If
$A(\OO_I)\not= A(\OO_J)$ then there is $Q_0\in Q(A,\OO_I)\cap Q(A,\OO_J)$ such that $Q_0(I)\not=Q_0(J)$.
\end{lemma}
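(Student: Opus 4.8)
The plan is to prove the contrapositive: assuming that every query common to both executions produces the same answer on $I$ and on $J$, I will show that the two runs are in fact identical step by step, so that they terminate with the same output $A(\OO_I)=A(\OO_J)$.

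First I would formalize the behavior of a deterministic adaptive algorithm as being determined entirely by the \emph{transcript} of query/answer pairs seen so far: at each stage the next query, as well as the decision to halt and which element of $[n]$ to output, is a fixed function of this history. In particular the very first query is the same in both executions, since at that moment the history is empty.

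The core step is an induction on the number of steps $t$. I would maintain the hypothesis that through step $t$ the two runs have asked exactly the same queries $Q_1,\ldots,Q_t$ and that each of them satisfies $Q_i(I)=Q_i(J)$. The base case is the shared first query, which is common to both runs, so the standing assumption gives equal answers. For the inductive step, equal queries together with equal answers mean the two transcripts after $t$ steps coincide; determinism then forces the same next action, namely either both halt with the same output or both ask the identical query $Q_{t+1}$. Being common to both runs, $Q_{t+1}$ satisfies $Q_{t+1}(I)=Q_{t+1}(J)$ by assumption, which closes the induction. Since $A$ terminates, both executions eventually halt at the same step with the same value, giving $A(\OO_I)=A(\OO_J)$.

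Taking the contrapositive, $A(\OO_I)\ne A(\OO_J)$ forces the existence of a query $Q_0\in Q(A,\OO_I)\cap Q(A,\OO_J)$ with $Q_0(I)\ne Q_0(J)$, as desired. I do not expect a genuine obstacle here; the only point requiring care is to state precisely that an adaptive deterministic algorithm is a function of its transcript, so that ``same queries plus same answers'' propagates to ``same next query.'' This propagation is exactly what the induction exploits, and it is also where one would need to be explicit that $Q_0$ is the \emph{first} query at which the answers disagree, guaranteeing that all queries preceding it were asked identically in both runs and hence that $Q_0$ itself is common to the two executions.
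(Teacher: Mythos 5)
Your proof is correct and is essentially the paper's argument in contrapositive form: the paper directly tracks the two query sequences, observes that determinism forces identical queries as long as the answers agree, and locates the first query $Q_0$ where the answers differ, while you run the same ``identical transcripts propagate'' induction and conclude equality of outputs when no common query separates $I$ and $J$. The two presentations are logically interchangeable, so there is nothing to add.
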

\begin{proof} Consider the sequence of queries $Q_{1,1}, Q_{1,2}, \cdots$ that $A$ asks with the oracle $\OO_I$
and the sequence of queries $Q_{2,1}, Q_{2,2}, \cdots$ that $A$ asks with the oracle $\OO_J$. Since $A$ is deterministic,
$A$ asks the same queries as long as it gets the same answers to the queries. That is, if $Q_{1,i}(I)=Q_{2,i}(J)$ for all $i\le \ell$ then
$Q_{1,\ell+1}=Q_{2,\ell+1}$. Since $A(\OO_I)\not= A(\OO_J)$, there must be a query $Q_0:=Q_{1,t}=Q_{2,t}$ for which $Q_0(I)\not=Q_0(J)$.\qed
\end{proof}

\begin{lemma}\label{mainlemma} Let $A$ be a deterministic adaptive algorithm that asks queries. Let $C\subseteq 2^{[n]}:=\{I|I\subseteq [n]\}$.
If for every two distinct $I_1$ and $I_2$ in $C$ there is a query $Q_0\in Q(A,\OO_{I_1})$ such that $Q_0(I_1)\not= Q_0(I_2)$ then
$$\max_{I\in C}|Q(A,\OO_{I})|\ge \E_{I\in C}[|Q(A,\OO_{I})|]\ge \log |C|.$$
That is, the worst-case query complexity and the average-case query complexity of $A$ is at least $\log |C|$.
\end{lemma}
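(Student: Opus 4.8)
The plan is to encode each defective set $I\in C$ as a binary \emph{answer string} $s_I$ recording the answers that the oracle $\OO_I$ returns to the queries $A$ asks, and then to invoke Lemma~\ref{string1} on the resulting family of strings. Concretely, for each $I\in C$ I would run $A$ against $\OO_I$, let $Q_1^I,Q_2^I,\ldots$ be the queries it asks in order, and set $s_I:=(Q_1^I(I),Q_2^I(I),\ldots)$. It is convenient first to assume, without loss of generality, that $A$ never repeats a query on a single run: replacing any repeated query by its already-known answer gives an equivalent deterministic algorithm with the same set $Q(A,\OO_I)$ of distinct queries and the same output. Under this normalization $|s_I|=|Q(A,\OO_I)|$, which is exactly the quantity I want to bound.

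The heart of the argument is to check that $S:=\{s_I:I\in C\}$ consists of $|C|$ distinct strings, none a proper prefix of another, so that Lemma~\ref{string1} applies. Fix distinct $I_1,I_2\in C$. The hypothesis supplies a query $Q_0\in Q(A,\OO_{I_1})$ with $Q_0(I_1)\neq Q_0(I_2)$, so I may let $t$ be the least index for which $Q_t^{I_1}(I_1)\neq Q_t^{I_1}(I_2)$; note $t\le |Q(A,\OO_{I_1})|$ since $Q_t^{I_1}$ is a genuine coordinate of $s_{I_1}$. Exactly as in the proof of Lemma~\ref{Trivial}, determinism forces the two runs to coincide while their answers agree: for $j<t$ we have $Q_j^{I_1}(I_1)=Q_j^{I_1}(I_2)$, whence by induction $Q_j^{I_1}=Q_j^{I_2}$ for all $j\le t$. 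In particular the $I_2$-run also reaches step $t$, and there $s_{I_2}$ carries the bit $Q_t^{I_2}(I_2)=Q_t^{I_1}(I_2)$, which differs from the bit $Q_t^{I_1}(I_1)$ carried by $s_{I_1}$. Thus $s_{I_1}$ and $s_{I_2}$ both have length at least $t$ and disagree at coordinate $t$, so they are distinct and neither is a prefix of the other.

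Finally I would transfer Lemma~\ref{string1} back to query complexity. The map $I\mapsto s_I$ is a bijection of $C$ onto $S$ (injective because the strings are pairwise distinct) that preserves lengths, since $|s_I|=|Q(A,\OO_I)|$. Hence the uniform distribution on $C$ pushes forward to the uniform distribution on $S$, giving $\E_{I\in C}[|Q(A,\OO_I)|]=E(S)$ and $\max_{I\in C}|Q(A,\OO_I)|=\max_{s\in S}|s|$. Lemma~\ref{string1} then delivers $\max_{I\in C}|Q(A,\OO_I)|\ge \E_{I\in C}[|Q(A,\OO_I)|]\ge \log|C|$, which is the claim.

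I expect the main obstacle to be the prefix-free verification of the middle paragraph rather than the arithmetic: one must ensure that the divergence index $t$ is a legitimate coordinate of \emph{both} strings and that the differing bit is read from the \emph{same} query in both runs, which is precisely what the determinism-plus-induction alignment secures. The asymmetry in the hypothesis, namely that the distinguishing query lies in $Q(A,\OO_{I_1})$, is exactly what guarantees $t\le|Q(A,\OO_{I_1})|$ and hence that position $t$ indexes a real coordinate of $s_{I_1}$.
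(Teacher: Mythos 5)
Your proof is correct and takes essentially the same route as the paper: encode each run of $A$ against $\OO_I$ as its answer string $s_I$, use determinism to align the two runs and show the family $\{s_I : I\in C\}$ is pairwise distinct and prefix-free, then apply Lemma~\ref{string1}. The only difference is your explicit (and harmless) normalization that $A$ never repeats a query so that $|s_I|=|Q(A,\OO_I)|$, a point the paper's proof leaves implicit.
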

\begin{proof}For $I\in C$, consider the sequence of the queries that $A$ with the oracle ${\cal O}_{I}$ asks and let $s(I)\in \cup_{i=0}^\infty\{0,1\}^i$ be the sequence of answers.
The worst case query complexity and average-case query complexity of $A$ are $s(C):=\max_{I\in C}|s(I)|$ and $\bar s(C):=\E_{I\in C}[|s(I)|]$, respectively, where $|s(I)|$ is the length of $s(I)$.
We now show that for every two distinct $I_1$ and $I_2$ in $C$, $s(I_1)\not=s(I_2)$ and $s(I_1)$ is not a prefix of $s(I_2)$.
This implies that $\{s(I)\ |\ I\in C\}$ contains $|C|$ distinct strings such that no string is a prefix of another. Then by Lemma~\ref{string1}, the result follows.
Consider two distinct sets $I_1,I_2\subseteq [n]$. There is a query $Q_0\in Q(A,\OO_{I_1})$ such that $Q_0(I_1)\not=Q_0(I_2)$.
Consider the sequence of queries $Q_{1,1}, Q_{1,2}, \cdots$ that $A$ asks with the oracle $\OO_{I_1}$
and the sequence of queries $Q_{2,1}, Q_{2,2}, \cdots$ that $A$ asks with the oracle $\OO_{I_2}$. Since $A$ is deterministic,
$A$ asks the same queries as long as it gets the same answers to the queries. That is, if $Q_{1,i}(I_1)=Q_{2,i}(I_2)$ for all $i\le \ell$ then
$Q_{1,\ell+1}=Q_{2,\ell+1}$. Then, either we get in both sequences to the query $Q_0$ and then $Q_0(I_1)\not=Q_0(I_2)$ or some other query $Q'$ that is asked before $Q_0$ satisfies $Q'(I_1)\not=Q'(I_2)$. In both cases $s(I_1)\not=s(I_2)$ and $s(I_1)$ is not a prefix of $s(I_2)$.\qed
\end{proof}

\section{Lower Bounds}
In this section, we prove some lower bounds for the number of queries that are needed in order to estimate the number of defective items.

\subsection{Lower Bounds for Deterministic and Las Vegas algorithms}
For deterministic algorithms, we prove
\begin{theorem}\label{THD} Let $A$ be a deterministic adaptive algorithm that estimates the number of defective items $|I|=d$ up to a multiplicative factor of $1\pm \epsilon$. The query complexity of $A$ is at least
$$d\log\frac{(1-\epsilon)n}{d}-O(d).$$

In particular, for $\epsilon\le 1-1/n^{\lambda}$ where $0<\lambda<1$ is any constant, the problem of estimating the number of defective items with a deterministic adaptive algorithm is asymptotically equivalent to finding them.
\end{theorem}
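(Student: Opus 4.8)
The plan is to turn the estimation guarantee into a counting argument over the computation tree of $A$, in the spirit of Lemma~\ref{mainlemma}. The naive attempt fails: two inputs of the same size $d$ may legitimately receive the same output, so correctness does \emph{not} force a separating query between them, and one cannot simply apply Lemma~\ref{mainlemma} to all $\binom{n}{d}$ size-$d$ sets. Instead I would group the size-$d$ inputs by the computation path they induce, bound how many of them can share a single path, and play this against the total number $\binom{n}{d}$ of size-$d$ inputs.

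First I would restrict attention to inputs $I$ with $|I|=d$, which is exactly the regime measured by $\max_{|I|=d}|Q(A,\OO_I)|$. For such an $I$ let $\tau(I)$ be the sequence of oracle answers. By determinism (as in the proof of Lemma~\ref{mainlemma}) the family of transcripts is prefix-free, and two size-$d$ inputs with distinct transcripts are separated at the first differing answer by a query belonging to $Q(A,\OO_{I_1})$. Hence a system $C$ of size-$d$ inputs with pairwise distinct transcripts satisfies the hypothesis of Lemma~\ref{mainlemma}, giving $\max_{|I|=d}|Q(A,\OO_I)|\ge\log|C|=\log N$, where $N$ is the number of distinct transcripts produced by size-$d$ inputs.

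The core step, and the one I expect to be the main obstacle, is to upper bound the number of size-$d$ inputs funneling into a fixed transcript $\tau$. Let $R_\tau=[n]\setminus\bigcup\{Q:Q\text{ is answered }0\text{ along }\tau\}$; any size-$d$ input reaching $\tau$ avoids every $0$-query and so lies inside $R_\tau$, bounding their number by $\binom{|R_\tau|}{d}$. To control $|R_\tau|$ I would run $A$ on the single input $J=R_\tau$: it is disjoint from every $0$-query and meets every $1$-query (since it contains the defectives of any input reaching $\tau$), so $A$ retraces $\tau$ on $J$ and returns the same value $v$. Correctness then requires $v\in[(1-\epsilon)d,(1+\epsilon)d]$ and simultaneously $v\ge(1-\epsilon)|R_\tau|$, forcing $|R_\tau|\le\tfrac{1+\epsilon}{1-\epsilon}d$. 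The delicate points here are verifying that $J=R_\tau$ genuinely follows $\tau$ (so that both validity intervals apply to one output) and extracting the size bound from their overlap; everything else is bookkeeping.

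Finally I would combine the bounds: each transcript is shared by at most $\binom{\lceil\frac{1+\epsilon}{1-\epsilon}d\rceil}{d}$ size-$d$ inputs, so $N\ge\binom{n}{d}\big/\binom{\lceil\frac{1+\epsilon}{1-\epsilon}d\rceil}{d}$. Using $\binom{n}{d}\ge(n/d)^{d}$ and $\binom{m}{d}\le(em/d)^{d}$ yields
$$\log N\ge d\log\frac{n}{d}-d\log\frac{e(1+\epsilon)}{1-\epsilon}=d\log\frac{(1-\epsilon)n}{d}-O(d),$$
since $1+\epsilon\le 2$. The concluding remark is then immediate: for $\epsilon\le 1-1/n^{\lambda}$ this bound is $\Theta(d\log(n/d))$, matching the known $d\log(n/d)+O(d)$ cost of actually locating the defectives, so estimation is asymptotically as hard as finding.
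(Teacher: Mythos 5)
Your proposal is correct and takes essentially the same route as the paper's proof: partition the size-$d$ inputs by transcript, bound the multiplicity of each transcript by exhibiting one large set that provably retraces that transcript (so the single output must be a valid estimate both of $d$ and of the large set's size, forcing the large set to have size at most $\frac{1+\epsilon}{1-\epsilon}d$), and then count transcripts via the prefix-free lemma. The only cosmetic difference is your choice of the large set as $R_\tau$, the complement of the union of the $0$-answered queries, where the paper instead takes the union $J$ of all size-$d$ inputs sharing the transcript and shows $s(J)=s(I')$ by contradiction; since $J\subseteq R_\tau$ and both satisfy the same size bound, the two implementations are interchangeable and yield the same $d\log\frac{(1-\epsilon)n}{d}-O(d)$ bound.
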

\begin{proof}Consider the sequence of queries that $A$ with an oracle $\OO_{I}$ asks and let $s(I)\in \cup_{i=1}^\infty \{0,1\}^i$ be the string of answers. Consider the algorithm $A$ with the oracles $\OO_{I_1}$ and $\OO_{I_2}$ where $I_1$ and $I_2$ are any sets of sizes $|I_1|=d$ and
$|I_2|\ge d':=(d+1)(1+\epsilon)/(1-\epsilon)$. For $I_1$, $A$ outputs an integer $D_1$ where $(1-\epsilon)d\le D_1\le (1+\epsilon)d$ and for $I_2$, $A$ outputs an integer $D_2$ where $d(1+\epsilon)+(1+\epsilon)\le  D_2$. Therefore, $D_1\not=D_2$ and hence $s(I_1)\not=s(I_2)$. This shows that if $|I_1|=d$ and $s(I_1)=s(I_2)$ then $|I_2|\le d'-1$.

Now let $I'\subseteq X$ be any set of size $d$. Let ${\cal I}$ be the set of all sets $I\subset X$ of size $d$ that have the same sequence of answers as $I$, i.e., $s(I)=s(I')$. Let $J=\cup_{I\in {\cal I}}I$. We now prove that $s(J)=s(I')$. Suppose for the contrary that this is not true. Then since $I'\subseteq J$ there is a query $Q$ asked by $A$ where $Q(J)=1$ and $Q(I')=0$. Therefore there is $j\in J\backslash I'$ such that $Q(j)=1$ and $Q(I')=0$. Since $j\in J$ there must be $I''\in {\cal I}$ such that $j\in I''$ and then $Q(I'')=1$. This is a contradiction to the fact that $s(I')=s(I'')$. Therefore, $s(J)=s(I')$, and by the above argument, we must have $|J|\le d'-1$. Since ${\cal I}$ contains subsets of $J$ of size~$d$, we have
$$|{\cal I}|\le L:={d'-1\choose d}.$$ This shows that each string in $\{s(I) : |I|=d\}$ corresponds to at most $L$ sets of size~$d$. Therefore $\{s(I) : |I|=d\}$ contains at least
$$M:=\frac{{n\choose d}}{{d'-1\choose d}}$$ distinct strings, and since the algorithm is deterministic, no string is a prefix of another. By Lemma~\ref{string1}, the longest string is of length at least
$$C:=\log M=\log\frac{{n\choose d}}{{d'-1\choose d}}\ge d\log\frac{n}{d}-d\log \left(\frac{1}{1-\epsilon}\right)-O(d).$$
Since the length of the longest string is the worst case query complexity of the deterministic algorithm the result follows.\qed
\end{proof}

For randomized Las Vegas algorithms, we prove
\begin{theorem}\label{THLV} Let $A$ be a randomized Las Vegas adaptive algorithm that estimates the number of defective items $|I|=d$ up to a multiplicative factor of $1\pm \epsilon$. The expected query complexity of $A$ is at least
$$d\log\frac{(1-\epsilon)n}{d}-O(d).$$

In particular, for $\epsilon\le 1-1/n^{\lambda}$ where $0<\lambda<1$ is any constant, the problem of estimating the number of defective items with a randomized Las Vegas adaptive algorithm is asymptotically equivalent to finding them.
\end{theorem}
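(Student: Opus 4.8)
The plan is to derive Theorem~\ref{THLV} from the deterministic bound of Theorem~\ref{THD} by the easy (averaging) direction of Yao's principle, the only genuine extra work being to upgrade that theorem's \emph{worst-case} estimate to an \emph{average-case} one. Regard the Las Vegas algorithm $A$ as a distribution over the deterministic algorithms $A(r)$, let $C$ be the collection of all subsets of $X$ of size exactly $d$, and let $\mu$ be the uniform distribution on $C$. Writing $\mathrm{cost}(r,I)=|Q(A(r),\OO_I)|$, the fact that a maximum dominates an average together with Tonelli (the summands are nonnegative) gives
\[
\max_{|I|=d}\E_r[\mathrm{cost}(r,I)]\ \ge\ \E_{I\sim\mu}\E_r[\mathrm{cost}(r,I)]\ =\ \E_r\,\E_{I\sim\mu}[\mathrm{cost}(r,I)],
\]
so it suffices to bound the inner average for almost every fixed seed.

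Next I would pass from the randomized algorithm to deterministic ones using the Las Vegas guarantee. Since $A$ errs with probability $0$ on each fixed input and $C$ is finite, a union bound shows that for all but a measure-zero set of seeds $r$ the deterministic algorithm $A(r)$ returns a correct estimate in $[(1-\epsilon)d,(1+\epsilon)d]$ simultaneously for \emph{every} $I\in C$. These \emph{good} seeds carry all the probability mass and the discarded seeds contribute nonnegatively to $\E_r$, so it is enough to prove $\E_{I\sim\mu}[\mathrm{cost}(r,I)]\ge d\log\frac{(1-\epsilon)n}{d}-O(d)$ for each good seed $r$.

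Fix such a good seed and rerun the counting argument from the proof of Theorem~\ref{THD}. For the deterministic algorithm $A(r)$ let $s_r(I)$ be the string of answers on input $I$; these strings are prefix-free, and exactly as in Theorem~\ref{THD} every answer string is produced by at most $L=\binom{d'-1}{d}$ sets of $C$ (if several $d$-sets shared an answer string, so would their union, which therefore has size at most $d'-1$ and contains all of them). Combining this with the Kraft-type fact that at most $2^{\ell}$ prefix-free strings have length $\le \ell$, at most $L\,2^{\ell}$ sets of $C$ yield $|s_r(I)|\le \ell$, whence $\Pr_{I\sim\mu}[\,|s_r(I)|\le \ell\,]\le 2^{\ell}/M$ with $M=\binom{n}{d}/L$. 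Summing only the first $\lfloor\log M\rfloor$ terms of the tail expansion of the expectation,
\[
\E_{I\sim\mu}[\,|s_r(I)|\,]=\sum_{\ell\ge 0}\Pr_{I\sim\mu}[\,|s_r(I)|>\ell\,]\ \ge\ \sum_{\ell=0}^{\lfloor\log M\rfloor-1}\Big(1-\frac{2^{\ell}}{M}\Big)\ \ge\ \log M-O(1),
\]
and $\log M\ge d\log\frac{(1-\epsilon)n}{d}-O(d)$ exactly as in Theorem~\ref{THD}. Since $|s_r(I)|=\mathrm{cost}(r,I)$, this is the required per-seed bound.

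Chaining the three displays yields $\E_r\E_{I\sim\mu}[\mathrm{cost}]\ge d\log\frac{(1-\epsilon)n}{d}-O(d)$, so some input of size $d$ forces at least this many expected queries, giving the theorem; the in-particular clause then follows verbatim from the deterministic case. The routine ingredients are the averaging identity and the measure-zero union bound over $C$. The one step that needs real care is the third, converting Theorem~\ref{THD}'s worst-case guarantee $\max_I|s_r(I)|\ge\log M$ into the average-case guarantee $\E_I|s_r(I)|\ge\log M-O(1)$; the Kraft-plus-multiplicity tail estimate above is precisely what supplies it without losing a constant factor in front of $\log M$, which is the subtle point I would watch most closely.
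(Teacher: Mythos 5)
Your proof is correct and follows essentially the same route as the paper's: the paper likewise fixes the seed, reruns the Theorem~\ref{THD} counting to get at least $M$ prefix-free answer strings each shared by at most $L=\binom{d'-1}{d}$ sets of size $d$, lower-bounds the per-seed average query count by $\log M$, and swaps $\E_r$ and $\E_I$; the only (cosmetic) difference is that where you use Kraft's inequality and a tail sum, losing an additive $O(1)$, the paper sorts the answer strings by length, caps each multiplicity at $L$ so the mass concentrates on the $M$ shortest strings, and invokes the average-case part of its Lemma~\ref{string1}. One small fix: your ``good'' seeds must be those on which $A(r)$ is correct on \emph{all} subsets of $[n]$ (still a measure-zero union bound, now over finitely many inputs), not just on $C$, because the multiplicity bound $L$ requires correctness on the unions $J$ of $d$-sets sharing an answer string and on sets of size at least $d'$, which lie outside $C$.
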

\begin{proof}
Let $X(I,r)=|Q(A(r),\OO_I)|$ be a random variable of the number of queries that $A$ asks with oracle $\OO_I$ and let $g(d)=\max_{|I|=d}\E_r[X(I,r)]$ be the expected number of queries. Notice that for a fixed $r$, $A(r)$ is a deterministic algorithm.
Consider $S_r=\{s_r(I): |I|=d\}$ where $s_r(I)$ is the string of answers of the deterministic algorithm $A(r)$ with an oracle $\OO_I$. Suppose $S_r=\{w_1,\ldots,w_t\}$ and $|w_1|\le |w_2|\le \cdots \le |w_t|$. Consider a partition $W_1\cup W_2\cup \cdots\cup W_t$ of the set of all sets of size $d$,  where $W_i=\{I:|I|=d, s_r(I)=w_i\}$. As in the proof of Theorem~\ref{THD}, there are at least $t\ge M$ distinct strings in $S_r$. Also, no string is a prefix of another string because the algorithm is deterministic. Also, as in the proof of Theorem~\ref{THD}, for all $i$,
$$|W_i|\le {d'-1\choose d}.$$
Then, since $|w_1|\le |w_2|\le \cdots \le |w_t|$ and by Lemma~\ref{string2},
\begin{eqnarray*}
\E_I[X(I,r)|r]&=& \frac{\sum_{i=1}^t |W_i|\cdot |w_i|}{{n\choose d}}\\
&\ge & \frac{\sum_{i=1}^{M} {d'-1\choose d}\cdot |w_i|}{{n\choose d}}\\
&= & \frac{\sum_{i=1}^{M} |w_i|}{M}
\ge \log M.\\
\end{eqnarray*}

Thus
$$\E_I[\E_r[X(I,r)]]=\E_r[\E_I[X(I,r)|r]]\ge \log M.$$
Therefore, there is $I_0$ such that
$g(d)\ge \E_r[X({I_0},r)]\ge \log M$.\qed
\end{proof}
\subsection{Lower Bounds for Monte Carlo Algorithms}
We now give three lower bounds for randomized Monte Carlo adaptive algorithms.

Before presenting the first lower bound, it is important to note that when $\epsilon = \Theta(1/n)$, the algorithm that queries each item individually requires $n = O(1/\epsilon)$ queries. Therefore, we can assume that $\epsilon > 2/n$.

\begin{theorem}\label{TH1}
Let $2/n<\epsilon<1/2$, $d\ge 1/\epsilon$ and $\epsilon^\lambda\ge \delta\ge 1/n^{\lambda'}$ where $\lambda,\lambda'>1$ are any constants.
Let $A$ be a randomized Monte Carlo adaptive algorithm that estimates
the number of defective items up to a multiplicative factor of $1\pm \epsilon$. Algorithm $A$ must ask at least
$$\Omega\left(\frac{1}{\epsilon}\log\frac{1}{\delta}\right)$$ queries.
\end{theorem}
\begin{proof} It is enough to prove the result for $\epsilon^\lambda\ge \delta\ge 1/(n+2)$. This is because, under the assumption of such a result, any algorithm $A$ that has a failure probability of at most $\delta'$ where $1/(n+2)\ge \delta'\ge1/n^{\lambda'} $ also has a failure probability of at most $\delta:=1/(n+2)$, and therefore, the query complexity of $A$ is at least $\Omega((1/\epsilon)\log(1/\delta))=\Omega((1/\epsilon)\log(1/\delta'))$. 

Let $\epsilon^\lambda\ge \delta\ge 1/(n+2)$. Let $A(r)$ be a randomized Monte Carlo adaptive algorithm that, with probability at least $1-\delta$,
estimates
the number of defective items $|I|$ up to a multiplicative factor of $1\pm \epsilon$ where $r$ is the random seed of the algorithm.
Then for $|I|\in \{d',d'+1\}$ where $d'=\max(\lfloor 1/\epsilon\rfloor-2,1)<d$, it determines exactly $|I|$ with probability at least $1-\delta$. 
Let $X(I,r)$ be a random variable that is equal to $1$ if $A(\OO_I,r)\not= |I|$ and $0$ otherwise. Then for any $I\subseteq [n], |I|\in\{d',d'+1\}$ we have $\E_r[X(I,r)]\le \delta$. Let $m=\lfloor 1/(2\delta)\rfloor+d'-1\le n$.
Consider any $J\subseteq [m]$, $|J|=d'$. For any such $J$, let
$$Y_J(r)=X(J,r)+\sum_{i\in [m]\backslash J} X(J\cup\{i\},r).$$ Then for every $J\subseteq [m]$ of size $d'$,
$\E_r\left[Y_J(r)\right] \le (m-d'+1)\delta\le \frac{1}{2}.$ Therefore for a random uniform $J\subseteq [m]$ of size $d'$ we have
$\E_r[\E_J[Y_J(r)]]=\E_J[\E_r[Y_J(r)]]\le 1/2$. Thus, there is $r_0$ such that
for at least half of the sets $J\subseteq [m]$, of size $d'$, $Y_J(r_0)=0$. Let $C$ be the set of all $J\subseteq [m]$, of size $d'$, such that $Y_J(r_0)=0$. Then $$|C|\ge \frac{1}{2}{m\choose d'}=\frac{1}{2}{\lfloor 1/(2\delta)\rfloor+d'-1 \choose d'}.$$

Consider the deterministic algorithm $A(r_0)$.
We claim that for every two distinct $J_1,J_2\in C$, there is a query $Q\in Q(A(r_0),\OO_{J_1})$ such that $Q(J_1)\not=Q(J_2)$. If this is true then, by Lemma~\ref{mainlemma}, the query complexity of $A(r_0)$ is at least
$$\log |C|\ge \log \frac{1}{2}{\lfloor 1/(2\delta)\rfloor+d'-1 \choose d'}\ge d'\log \frac{1}{2d'\delta}-1=\Omega\left(\frac{1}{\epsilon}\log\frac{1}{\delta}\right).$$

We now prove the claim.
Consider two distinct $J_1,J_2\in C$. Since $|J_1|=|J_2|$ there exists $j\in J_2\backslash J_1$. Since $Y_{J_1}(r_0)=0$ we have $X(J_1,r_0)=0$ and $X(J_1\cup \{j\},r_0)=0$ and therefore $A(\OO_{J_1},r_0)=d$ and $A(\OO_{J_1\cup\{j\}},r_0)=d+1$. Thus, by Lemma~\ref{Trivial}, there is a query $Q_0\in Q(A(r_0),\OO_{J_1})\cap Q(A(r_0),\OO_{J_1\cup \{j\}})$ for which $Q_0(J_1)=0$ and $Q_0(J_1\cup\{j\})=1$. Therefore
$Q_0(\{j\})=1$ and then $Q_0(J_1)=0$ and $Q_0(J_2)=1$.\qed
\end{proof}

The following is the second lower bound for randomized Monte Carlo adaptive algorithms
\begin{theorem}\label{TH4p}Let $A$ be a randomized Monte Carlo adaptive algorithm that estimates the number of defective items with any\footnote{The constant $7/9$ can be substituted with any constant that is less than $1$.} $\epsilon<7/9$ and probability at least $1-\delta>1/2$. The query complexity of $A$
is at least
$$\log\log d-O(1).$$
\end{theorem}\label{Thss}
\begin{proof} Let $A$ be a randomized Monte Carlo algorithm that estimates $|I|\le d$ with probability at least $1-\delta$. Consider the class of sets of defective items $C=\{[8^i]|i=1,2,\ldots,\log d/3\}$. Since $(1+\epsilon)8^i<(1-\epsilon)8^{i+1}$, the algorithm can, with probability at least $1-\delta$, determine exactly the size of $I\in C$. 

For $I\in C$, let $X(I,r)$ be a random variable where $X(I,r)=1$ if $A(\OO_I,r)\not=|I|$ and $0$ otherwise. Then for every $j$, $\E_r[X([8^j],r)]\le \delta$. Now for a random uniform $[8^j]\in C$, we have $\E_r[\E_j[X([8^j],r)]]=\E_j[\E_r[X([8^j],r)]]\le \delta$. Therefore, there is a seed $r_0$ such that $\E_j[X([8^j],r_0)]\le \delta$. This implies that for at least $t:=(1-\delta)(\log d/3)$ sets $J:=\{[8^{j_1}],\ldots, [8^{j_t}]\}\subseteq C$ the deterministic algorithm $A(r_0)$  determines exactly $|I|$ provided that $|I|\in J$. Therefore, as in the above proofs, $A(r_0)$ asks at least
\begin{eqnarray}\label{jfjd}
\log t=\log\log d+\log(1-\delta)-\log 3\ge \log\log d-3.
\end{eqnarray} queries.\qed
\end{proof}
\subsection{Lower Bounds for Randomized Monte Carlo Algorithm with Expected Complexity}
We now consider randomized algorithms with success probability at least $1-\delta$ and $g(n,|I|,\delta)$ expected number of queries.

We first prove
\begin{theorem}\label{TH11}
Let $2/n<\epsilon<1/4$, $d\ge 1/\epsilon$ and $\epsilon^\lambda\ge \delta\ge 1/n^{\lambda'}$ where $\lambda,\lambda'>1$ are any constants.
Let $A$ be a randomized adaptive algorithm that estimates
the number of defective items up to a multiplicative factor of $1\pm \epsilon$. The expected number of queries of $A$ is at least
$$\Omega\left(\frac{1}{\epsilon}\log\frac{1}{\delta}\right).$$
\end{theorem}
\begin{proof} As in Theorem~\ref{TH1} we may assume that $\epsilon^\lambda\ge \delta\ge 1/(2n+4)$. Let $A(r)$ be a randomized algorithm that
estimates
the number of defective items up to a multiplicative factor of $1\pm \epsilon$ where $r$ is the random seed of the algorithm.
Then for and $|I|\in \{d',d'+1\}$ where $d'=\lfloor 1/\epsilon\rfloor-2$, it determines exactly $|I|$ with probability at least $1-\delta$. Let $X(I,r)$ be a random variable that is equal to $1$ if $A(\OO_I,r)\not= |I|$ and $0$ otherwise. Then for any $I\subseteq [n]$,\ $\E_r[X(I,r)]\le \delta$. Let $m=\lfloor \tau/\delta\rfloor+d'-1\le n$ where $\tau=1/4>\delta$ is a constant that will be determined later.
Consider any $J\subseteq [m]$, $|J|=d'$. For any such $J$, let
$$Y_J(r)=X(J,r)+\sum_{i\in [m]\backslash J} X(J\cup\{i\},r).$$ Then for every $J\subseteq [m]$ of size $d'$,
$\E_r\left[Y_J(r)\right] \le (m-d'+1)\delta\le \tau.$ Therefore for a random uniform $J\subseteq [m]$ of size $d'$ we have
$\E_r[\E_J[Y_J(r)]]=\E_J[\E_r[Y_J(r)]]\le \tau$. Let $\eta=1/2>\tau$ be a constant that will be determined later. By Markov's inequality, for random $r$, with probability at least $1-\tau/\eta$,
at least $1-\eta$ fraction of the sets $J\subseteq [m]$, of size $d'$, $Y_J(r)=0$. Let $R$ be the set of such $r$. Then $\Pr_r[R]\ge 1-\tau/\eta$. Let $r_0\in R$. Let $C_{r_0}$ be the set of all $J\subseteq [m]$, of size $d'$, such that $Y_J(r_0)=0$. Then $$|C_{r_0}|\ge (1-\eta){m\choose d'}=(1-\eta){\lfloor \tau/\delta\rfloor+d'-1 \choose d'}.$$

Consider the deterministic algorithm $A(r_0)$. As in Theorem~\ref{TH1},
for every two distinct $J_1,J_2\in C_{r_0}$, there is a query $Q\in Q(A(r_0),\OO_{J_1})$ such that $Q(J_1)\not=Q(J_2)$. Then by Lemma~\ref{mainlemma}, the average-case query complexity of $A(r_0)$ is at least
$$\log |C_{r_0}|\ge \log (1-\eta){\lfloor \tau/\delta\rfloor+d'-1 \choose d'}\ge d'\log \frac{\tau }{d'\delta}-\log\frac{1}{1-\eta}.$$

Let $Z(\OO_I,r)=|Q(A(r),\OO_I)|$. We have shown that for every $r\in R$, $$\E_{I\in C_{r}}[Z(\OO_I,r)]\ge d'\log \frac{\tau}{d'\delta}-\log\frac{1}{1-\eta}.$$
Therefore for every $r\in R$,
\begin{eqnarray*}
\E_{I}[Z(\OO_I,r)]&\ge& \Pr [I\in C_r]\cdot \E_{I}[Z(\OO_I,r)|I\in C_r]\\
&\ge& (1-\eta)\left( d'\log \frac{\tau}{d'\delta}-\log\frac{1}{1-\eta}\right).
\end{eqnarray*}
Therefore
\begin{eqnarray*}
\E_{I}\E_r[Z(\OO_I,r)]&=& \E_r\E_I[Z(\OO_I,r)]\\
&\ge &\Pr[r\in R]\cdot \E_r[\E_I[Z(O_I,r)]|r\in R]\\
&\ge& \left(1-\frac{\tau}{\eta}\right)(1-\eta) \left( d'\log \frac{\tau}{d'\delta}-\log\frac{1}{\eta}\right).
\end{eqnarray*}
Therefore, there is $I$ such that
\begin{eqnarray*}
\E_r[Z(\OO_I,r)]\ge \left(1-\frac{\tau}{\eta}\right)(1-\eta) \left( d'\log \frac{\tau}{d'\delta}-\log\frac{1}{\eta}\right).
\end{eqnarray*}
Now for $\eta=1/2$, $\tau=1/4$, $d'=\lfloor 1/\epsilon\rfloor-2$ and $\epsilon^\lambda\ge \delta$, we get
$$\E_r[Z(\OO_I,r)]= \Omega\left(\frac{1}{\epsilon}\log\frac{1}{\delta}\right).\qed$$
\end{proof}

In \cite{FalahatgarJOPS16}, Falhatgar et al. gave the following lower bound for $g(d,\delta)$.
We give another simple proof in the Appendix for slightly weaker lower bound.
\begin{theorem}\label{TH4} Let $A$ be a randomized adaptive algorithm that estimates the number of defective items $|I|=d$ up to a multiplicative factor of $1/2$ with probability at least $1-\delta$. The expected number of queries of $A$
is at least
$$(1-\delta)\log\log d.$$
\end{theorem}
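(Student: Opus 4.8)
The plan is to mirror the proof of Theorem~\ref{TH4p}, but since we must bound the \emph{expected} (rather than worst-case) number of queries, we cannot freeze a single lucky seed; instead we average over all seeds and apply Lemma~\ref{mainlemma} seed-by-seed. Set $\epsilon=1/2$ and pick a base $B$ with $B>(1+\epsilon)/(1-\epsilon)=3$, say $B=4$. The point of this choice is that for two distinct powers $B^{j_1},B^{j_2}\le d$ the admissible output intervals $[(1-\epsilon)B^{j},(1+\epsilon)B^{j}]$ are disjoint, so any algorithm that succeeds on both must return different integers on them. Let $P=\lfloor\log_B d\rfloor=\Theta(\log d)$ and consider the nested inputs $[B^1]\subset[B^2]\subset\cdots\subset[B^P]\subseteq[n]$.

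First I would set up the relevant random variables. For a seed $r$ let $n_j(r)=|Q(A(r),\OO_{[B^j]})|$ be the number of queries the deterministic algorithm $A(r)$ asks on $\OO_{[B^j]}$, and let $T_r=\{j:\ A(r)\text{ outputs a correct estimate on }[B^j]\}$ be the set of indices it handles correctly. The success guarantee gives $\E_r[X_j(r)]\le\delta$ for the failure indicators $X_j$, hence $\E_r[|T_r|]=\sum_{j=1}^P\Pr_r[j\in T_r]\ge(1-\delta)P$.

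The heart of the argument is a seed-wise application of Lemma~\ref{mainlemma}. Fix $r$ and consider $C_r=\{[B^j]:\ j\in T_r\}$. For distinct $j_1,j_2\in T_r$ the algorithm $A(r)$ returns different integers (disjoint output intervals), so by Lemma~\ref{Trivial} there is a query distinguishing $[B^{j_1}]$ from $[B^{j_2}]$; thus the hypothesis of Lemma~\ref{mainlemma} holds for $C_r$, and its average-case bound yields $\sum_{j\in T_r}n_j(r)\ge|T_r|\log|T_r|$. Dropping the failing indices and chaining expectations,
$$\max_j g(B^j,\delta)\ \ge\ \frac1P\sum_{j=1}^P\E_r[n_j(r)]\ =\ \frac1P\,\E_r\Big[\sum_{j=1}^P n_j(r)\Big]\ \ge\ \frac1P\,\E_r\big[\,|T_r|\log|T_r|\,\big].$$
Since $x\mapsto x\log x$ is convex, Jensen's inequality together with $\E_r[|T_r|]\ge(1-\delta)P$ bounds the right-hand side below by $(1-\delta)\log((1-\delta)P)$, which is $(1-\delta)\log\log d-O(1)$; as the maximum of the per-size expected complexities is attained at some input of size at most $d$, this is the claimed leading term.

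The main obstacle I anticipate is the nonlinear averaging in the last step: the per-seed guarantee is $|T_r|\log|T_r|$, which is not linear in $|T_r|$, so a crude bound on $\E_r[|T_r|]$ alone is insufficient and the convexity of $x\log x$ is exactly what converts the expected number of correctly handled powers into the desired factor $(1-\delta)$ in front of $\log\log d$. A secondary point to verify is the base choice $B>3$, which guarantees disjoint output intervals and is precisely what licenses the use of Lemma~\ref{Trivial} (and hence Lemma~\ref{mainlemma}) on the successful set $C_r$; a smaller base would allow overlapping estimates and break the distinguishability needed to invoke the average-case bound.
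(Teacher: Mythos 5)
Your proposal is correct and takes a genuinely different route from the paper's Appendix proof. The paper proves this bound by truncation: it restricts the inputs to the nested sets $[2^j]$, $j\le\lfloor\log d\rfloor$, forces $A$ to halt after $q(d)/(1-\delta-\eta)$ queries so that (by Markov's inequality) the truncated algorithm becomes a Monte Carlo algorithm with worst-case query bound $q(d)/(1-\delta-\eta)$ and success probability at least $\eta$, and then invokes the worst-case Monte Carlo bound of Theorem~\ref{TH4p} (inequality~(\ref{jfjd})); optimizing $\eta=(\ln 2)(1-\delta)/\log\log d$ costs a $\log\log\log d$ term, so the paper in fact only establishes the slightly weaker $(1-\delta)(\log\log d-\log\log\log d-2)$ and attributes the clean statement to Falahatgar et al. You avoid truncation entirely: you apply the average-case half of Lemma~\ref{mainlemma} seed-by-seed to the correctly handled class $C_r$ (the hypothesis of that lemma being supplied by Lemma~\ref{Trivial} exactly as in the paper's Theorem~TH1 proof), getting $\sum_{j\in T_r}n_j(r)\ge|T_r|\log|T_r|$, and then convert $\E_r[|T_r|]\ge(1-\delta)P$ into the final bound via Jensen on the convex map $x\mapsto x\log x$ --- correctly identifying that convexity, not a naive average, is what produces the factor $(1-\delta)$. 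Your route buys a quantitatively better bound: you lose only an additive $O(1)$ (from the base $B=4$, which gives $\log P=\log\log d-1$, and from $(1-\delta)\log(1-\delta)\ge -1/(e\ln 2)$), whereas the paper's truncation loses $\Theta\bigl((1-\delta)\log\log\log d\bigr)$; neither proof attains the statement with zero additive slack, and by the paper's own standard (``slightly weaker lower bound'') your version is acceptable, indeed stronger. Your insistence on $B>3$ is also exactly right for $\epsilon=1/2$: disjointness of the admissible output intervals $[(1/2)B^j,(3/2)B^j]$ is what forces distinct outputs and licenses Lemma~\ref{Trivial}, and it mirrors why Theorem~\ref{TH4p} can use powers of $2$ only because it works with factor $1/4$. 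The one point you share with (and handle at least as explicitly as) the paper is the implicit monotonicity of the expected query complexity in the input size: your bound is witnessed at some input of size $B^j\le d$, just as the paper's truncation at $q(d)$ tacitly assumes the expectation on each $[2^j]$ is at most $q(d)$.
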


\section{Upper Bound}
In this section, we prove some upper bounds.

\subsection{Upper Bounds for Deterministic and Las Vegas algorithms}
This section gives a tight upper bound for the deterministic algorithm that matches the lower bound in Theorem~\ref{THD}. The time complexity of this algorithm is linear in the size of the queries.

The following result will be used in this section.
\begin{lemma}\label{Find} \cite{ChengDX14,ChengDZ15,SchlaghoffT05} There is a deterministic adaptive algorithm, {\bf Find -Defectives}, which, without prior knowledge of $d$, asks $d\log(n/d)+O(d)$ queries and finds the defective items.
\end{lemma}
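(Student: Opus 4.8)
The plan is to realize \textbf{Find-Defectives} as a generalized binary-splitting procedure and to bound its cost by an information-theoretic, amortized accounting against the quantity $\log\binom{n}{d}$, which by the estimates $(n/d)^d\le \binom{n}{d}\le (en/d)^d$ equals $d\log(n/d)+O(d)$. The guiding principle is to view the run as shrinking a \emph{version space}: at any moment the defective sets still consistent with the answers seen so far form a family $\cF$, and a single query $Q$ splits $\cF$ into the configurations with $Q(I)=0$ (those disjoint from $Q$) and those with $Q(I)=1$. Since each answer carries one bit, any algorithm that removes a constant fraction of $\cF$ per query finishes in $O(\log|\cF_0|)=O(\log\binom{n}{d})$ queries; the entire difficulty is to keep the \emph{leading} constant equal to $1$ and to do so without knowing $d$.

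First I would fix the splitting rule. Maintain an \emph{active set} $U$, initially $[n]$, of items of unknown status, with $m=|U|$ and a running guess of the local defective density. Repeatedly choose a block $B\subseteq U$ whose size is tuned so that testing $B$ is as balanced as a group test can be, i.e.\ so that the event $B\cap I=\emptyset$ has probability near $1/2$ over the surviving configurations; for a stage with $k$ remaining defectives this means $|B|\approx (m\ln 2)/k$. If the answer is negative, all of $B$ is good and is deleted from $U$, cleanly shrinking the version space. If the answer is positive, $B$ contains a defective, and I would run an ordinary binary search inside $B$ to isolate one defective in $\lceil\log|B|\rceil$ further tests, deleting the confirmed-good items revealed along the way and recording the defective found. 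Each positive phase removes exactly one defective, so there are exactly $d$ of them, and the process halts when a test of the whole current $U$ returns negative.

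For the counting I would charge every query to the information it extracts. Let $\Phi=\log|\cF|$, so $\Phi_0=\log\binom{n}{d}$ and $\Phi=0$ at the end. A balanced block test decreases $\Phi$ by $\ge 1-o(1)$ bit, and the binary-search steps that follow a positive answer are exactly the cost of re-localizing the version space back into product form after an answer that is informative but not cleanly factored. The clean way to collect the bound is to show that locating the $i$-th defective costs at most $\log(m_i/k_i)+O(1)$ queries, where $m_i$ and $k_i=d-i+1$ are the active-item and remaining-defective counts at that stage; summing, using $m_i\le n$, and invoking Stirling's estimate $\log(d!)=d\log d-O(d)$ gives
\[
\sum_{i=1}^{d}\Big(\log\frac{m_i}{k_i}+O(1)\Big)\le \sum_{i=1}^{d}\log\frac{n}{d-i+1}+O(d)=d\log n-\log(d!)+O(d)=d\log\frac{n}{d}+O(d),
\]
so the total number of queries is $d\log(n/d)+O(d)$, as required.

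The main obstacle is twofold and lives entirely in the constants. First, to obtain the leading constant $1$ rather than merely $O(d\log(n/d))$, every non-localization test must genuinely gain a full bit, which forces the block size to track the true density $m/k$ up to a constant factor; the gap between the ideal even split and what a monotone group test can actually realize is precisely what produces the additive $O(d)$ term, and bounding the re-localization overhead by $O(1)$ per defective needs care. Second, and more seriously, $d$ is unknown, so $k$ is unknown and the ideal block size cannot be computed directly. I would handle this competitively: start with a small block, double its size after each negative answer (the region is locally sparse, so probe more aggressively), and shrink it back once a defective is peeled off, so that $|B|$ stays within a constant factor of $m/k$ throughout. Proving that this doubling never wastes more than $O(1)$ tests per isolated defective — so that the slack telescopes into the same $O(d)$ term — together with handling the terminal phase where $U$ becomes comparable to the number of remaining defectives (where one switches to individual testing), is the delicate analysis carried out in the cited competitive group-testing works.
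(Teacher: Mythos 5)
The paper never proves this lemma: it is imported verbatim from \cite{ChengDX14,ChengDZ15,SchlaghoffT05}, so the only benchmark is the competitive group-testing algorithms in those references. Your outline correctly identifies their route: Hwang-type generalized binary splitting, one defective peeled off per positive phase by binary search inside the positive block (at cost $\lceil\log|B|\rceil$, with items certified good along the way deleted), and an amortized charge of the total cost against $\log\binom{n}{d}=d\log(n/d)+O(d)$. Your summation $\sum_{i=1}^{d}\bigl(\log(m_i/k_i)+O(1)\bigr)\le d\log n-\log(d!)+O(d)=d\log(n/d)+O(d)$ is also the right bookkeeping.

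However, as a proof the proposal has a genuine gap exactly at the step you yourself flag as delicate: the claim that the $i$-th defective is isolated in $\log(m_i/k_i)+O(1)$ queries when $d$ (hence $k_i$) is unknown. That per-phase bound \emph{is} the theorem of the cited papers, not something you may assume. The scheme you actually describe --- grow the block by doubling on negative answers, binary-search the first positive block --- costs roughly $2\log(m_i/k_i)$ per defective in the worst case: about $\log|B|$ ascending negative tests to reach the first positive block plus $\lceil\log|B|\rceil$ more for the binary search. Summed, this gives $2d\log(n/d)+O(d)$, i.e., leading constant $2$ rather than $1$, which does not prove the lemma. Your proposed remedy (carry the block size across phases so that $|B|$ tracks $m/k$ up to a constant factor, and show the slack telescopes into $O(d)$) is precisely what Schlaghoff--Triesch and Cheng--Du--Xu make rigorous, but you give no argument for it and explicitly defer it to the cited works; so the core amortization that yields leading constant $1$ without knowledge of $d$ is missing. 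A smaller issue: tuning $|B|$ so that $B\cap I=\emptyset$ has ``probability near $1/2$ over surviving configurations'' is a heuristic framing --- the algorithm is deterministic and must be analyzed against a worst-case (adversarial) answer sequence by counting surviving configurations, not via a distributional assumption. Since the paper itself settles for a citation here, matching it would require either doing the same or actually reproducing the competitive analysis; your text does neither in full.
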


We now prove.

\begin{theorem}\label{TH3up} There is a deterministic adaptive algorithm that estimates the number of defective items $|I|=d$ up to  a multiplicative factor of $1\pm \epsilon$ and asks
$$d\log\frac{(1-\epsilon)n}{d}+O(d)$$ queries.
\end{theorem}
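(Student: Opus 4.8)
The plan is to avoid pinpointing the defective items exactly and instead to \emph{coarsen} the ground set, so that the reduction in cost relative to {\bf Find-Defectives} (Lemma~\ref{Find}) comes precisely from the resolution we are allowed to lose when only a $1\pm\epsilon$ estimate is required. Concretely, I would set $b=\lfloor 1/(1-\epsilon)\rfloor$ and partition $[n]$ into $m=\lceil n/b\rceil$ blocks $B_1,\ldots,B_m$, each of size at most $b$. A query on a collection of blocks is simulated by the group-testing query on the union of the corresponding items, so with the oracle $\OO_I$ a block answers $1$ exactly when it contains a defective item. I would then run {\bf Find-Defectives} on the $m$ blocks (it needs no advance knowledge of the number of defective blocks), recover all defective blocks, let $d'$ be their number, and output $D=d'$.

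Next I would check that $D=d'$ is a legitimate $(1\pm\epsilon)$-estimate for \emph{every} instance $I$. Since distinct defective blocks contain distinct defective items, we have $d'\le d$; and since each block holds at most $b$ items, hence at most $b$ defectives, the $d'$ defective blocks account for at most $d'b$ defectives, giving $d'\ge d/b$. Using $b\le 1/(1-\epsilon)$ we get $d/b\ge(1-\epsilon)d$, so $(1-\epsilon)d\le d'=D\le d\le(1+\epsilon)d$, which is exactly what the output must satisfy.

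It then remains to bound the query complexity. By Lemma~\ref{Find} the call costs $d'\log(m/d')+O(d')$. Since $b\le 1/(1-\epsilon)$ we have $m=\lceil n/b\rceil\ge(1-\epsilon)n$, and as the stated bound (which matches the lower bound of Theorem~\ref{THD}) is only informative when $\tfrac{(1-\epsilon)n}{d}$ is bounded away from $1$, I may assume $d\le m/e$. There the map $x\mapsto x\log(m/x)$ is increasing, so $d'\le d$ yields $d'\log(m/d')\le d\log(m/d)$; and from $b=\lfloor 1/(1-\epsilon)\rfloor$ one checks $m\le 2(1-\epsilon)n+1$, whence $d\log(m/d)\le d\log\frac{(1-\epsilon)n}{d}+O(d)$. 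Combining these gives the claimed total $d\log\frac{(1-\epsilon)n}{d}+O(d)$.

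I expect the crux to be conceptual rather than computational: the idea of counting defective \emph{blocks} of size $\approx 1/(1-\epsilon)$ instead of locating individual defectives, together with the verification that the sandwich $d/b\le d'\le d$ always lands inside the multiplicative window $[1-\epsilon,1+\epsilon]$, so that correctness holds in the deterministic worst case over all $I$. Once this reduction is in place, the remaining work is routine: the estimate's validity is the one-line sandwich above, and the query count reduces to the monotonicity of $x\log(m/x)$ for $x\le m/e$ and the elementary estimate $m=\Theta((1-\epsilon)n)$.
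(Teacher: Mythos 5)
Your proposal is correct and follows essentially the same route as the paper: partition $[n]$ into roughly $(1-\epsilon)n$ blocks of size roughly $1/(1-\epsilon)$, run {\bf Find-Defectives} on the blocks by simulating a block-query as the union of its items, output the number $d'$ of defective blocks, and use the sandwich $(1-\epsilon)d\le d'\le d$. Your write-up is in fact somewhat more careful than the paper's, which ignores integrality of the block size and silently uses $d'\le d$ together with the monotonicity of $x\log(m/x)$ that you make explicit.
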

\begin{proof}The algorithm divides the set of items $X=[n]$ into $N=(1-\epsilon)n$ disjoint sets $X_1,\ldots,X_N$ where each set $X_i$ contains $1/(1-\epsilon)$ items. It then runs the algorithm {\bf Find-Defectives} in Lemma~\ref{Find} with $N$ items. For each query $Q\subseteq [N]$ in {\bf Find-Defectives}, the algorithm asks the query $Q'=\cup_{i\in Q}X_i$. By Lemma~\ref{Find}, the number of queries is
$$d\log\frac{N}{d}+O(d)=d\log\frac{(1-\epsilon)n}{d}+O(d).$$
Now since the $d$ defective items can appear in at most $d$ sets $X_i$ and at least $(1-\epsilon)d$ sets, the output of the algorithm is $D$ that satisfies $(1-\epsilon)d\le D\le d$.\qed
\end{proof}

\subsection{Upper Bounds for Randomized Monte Carlo Algorithm with Expected Complexity}
We now give a randomized algorithm that, for any constant $\epsilon$, its expected number of queries almost matches the lower bound in Theorem~\ref{TH4} and \ref{TH1}.
\begin{theorem}\label{TH7} For any constant $c>1$, there is a randomized algorithm that asks\footnote{The $\tilde O(\log (1/\delta))$ is $O((\log(1/\delta))(\log\log(1/\delta))$}
$$q=(1-\delta+\delta^c)\log\log d +O(\sqrt{\log\log d})+O\left(\frac{1}{\epsilon^2} \log\frac{1}{\delta}\right)+\tilde O\left(\log\frac{1}{\delta}\right)$$
expected number of queries and with probability at least $1-\delta$ estimates the number of defective items $d$ up to a multiplicative factor of $1\pm \epsilon$.
\end{theorem}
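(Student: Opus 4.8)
The plan is to construct the randomized algorithm in two stages: a cheap "coarse" phase that with high probability pins down the order of magnitude of $d$ using roughly $\log\log d$ queries, followed by a "refinement" phase that sharpens the estimate to the $1\pm\epsilon$ multiplicative guarantee using $O((1/\epsilon^2)\log(1/\delta))$ queries. The key design goal behind the unusual leading coefficient $(1-\delta+\delta^c)$ is that the $\log\log d$ cost should be paid in full only on the $(1-\delta)$-fraction of runs that succeed cheaply, while the $\delta$-fraction of "bad" runs must be caught and redone in a way whose \emph{expected} overhead contributes only $\delta^c\log\log d$ rather than $\delta\log\log d$; this is what lets the expected query count dip below the naive $\log\log d + O((1/\epsilon^2)\log(1/\delta))$ and match the lower bounds of Theorems~\ref{TH4} and~\ref{TH1} for constant $\epsilon$.

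First I would set up the coarse doubling search. For a guess $g$, query a random subset of $X$ in which each item is included with probability $\approx 1/g$; the probability the query returns $0$ is roughly $(1-1/g)^d$, which concentrates the transition around $g\approx d$. By performing a doubling search on $\log g$ — i.e. a binary search over the exponent — one locates $j$ with $2^{2^j}$ near $d$ in about $\log\log d$ queries, each repeated $O(\log(1/\delta))$ times and taking a majority to drive the per-step error below the budget. I would be careful to phrase the search so that the dominant term is a single $\log\log d$ (not a constant multiple of it), which forces the binary-search-on-the-exponent structure rather than a plain linear doubling. The additive $O(\sqrt{\log\log d})$ slack is exactly the kind of term one gets from allowing the repetition count in the search to grow mildly with the depth, so I would tune the number of repetitions per level to be small near the root and larger near the leaves.

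Second I would run the refinement phase on the coarse estimate $\tilde d$. Given $\tilde d$ within a constant factor of $d$, I would query $O((1/\epsilon^2)\log(1/\delta))$ independent random subsets, each including items with probability tuned to $\tilde d$, and use the observed fraction of $0$-answers as an unbiased-ish statistic for $d$; a Chernoff bound over these $O((1/\epsilon^2)\log(1/\delta))$ trials yields a $1\pm\epsilon$ estimate with failure probability $\delta$. This is the standard estimator and its analysis is routine once the coarse phase has supplied a good starting scale.

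The main obstacle — and the part that controls the leading coefficient — is the expectation bookkeeping for the restart mechanism. The idea is to append a cheap \emph{verification} step after the coarse phase: with the coarse estimate in hand, one re-queries to check consistency, and only if the check fails does the algorithm restart the (expensive) coarse search from scratch. The restart happens with probability $\approx\delta$, but by choosing the verification to itself be a high-confidence test (success probability $1-\delta^{c-1}$ or so), the \emph{expected} number of wasted $\log\log d$-cost runs is driven down to $\delta^c$, so that $\E[\text{queries}] = (1-\delta)\cdot\log\log d + \delta\cdot(\text{cost of a caught-and-restarted run, weighted by }\delta^{c-1})+\cdots$, collapsing to the stated $(1-\delta+\delta^c)\log\log d$ after summing the geometric series of possible restarts. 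I would need to verify that the verification step is itself cheap enough (costing only $O((1/\epsilon^2)\log(1/\delta))$, absorbed into the additive term) and that its false-acceptance probability is genuinely $O(\delta)$ so the final success guarantee is $1-\delta$, while its false-rejection probability is small enough not to inflate the leading term — balancing these two error rates against the exponent $c$ is the delicate calculation at the heart of the theorem.
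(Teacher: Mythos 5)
There is a genuine gap, and it sits exactly at what you call ``the delicate calculation at the heart of the theorem.'' Your verification-and-restart mechanism cannot produce the leading coefficient $(1-\delta+\delta^c)<1$: in your scheme every run of the algorithm executes the coarse phase at least once, so the expected query count is at least $1\cdot\log\log d$ plus the (nonnegative) restart overhead --- your own bookkeeping line $\E[\text{queries}]=(1-\delta)\log\log d+\cdots$ is wrong, because the first coarse phase is paid unconditionally, not only on the $(1-\delta)$-fraction of successful runs. Restarts can only push the expectation \emph{above} $\log\log d$, never below it. The paper's mechanism is entirely different and almost trivial: with probability $\delta-\delta^c$ the algorithm outputs $0$ immediately, \emph{asking zero queries} (deliberately spending that slice of the error budget on abstention), and with probability $1-(\delta-\delta^c)$ it runs an algorithm $A$ with failure probability $\delta^c$ and expected cost $q'(\delta^c)=\log\log d+O(\sqrt{\log\log d})+O((1/\epsilon^2)\log(1/\delta))$ (the factor $c$ from $\log(1/\delta^c)$ is absorbed since $c$ is constant). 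This gives expected cost $(1-\delta+\delta^c)\,q'(\delta^c)$ and overall failure probability at most $(\delta-\delta^c)+\delta^c=\delta$. Since the abstention decision is made before any query, it carries no information about $d$; no verification step can substitute for it, because any data-dependent decision to skip the coarse phase would itself require queries.

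Two secondary issues in your coarse phase. First, repeating each binary-search step $O(\log(1/\delta))$ times with majority votes gives $O(\log(1/\delta)\cdot\log\log d)$ queries, which destroys the additive structure of the bound; the algorithm of Falahatgar et al.~\cite{FalahatgarJOPS16}, which the paper modifies, avoids repetition by accepting a coarse window $\delta^2 d\le D_2\le d/\delta^2$ from a single-query-per-step search and cleaning up the $\delta$-dependence in the later $O((1/\epsilon^2)\log(1/\delta))$-query stages. Second, the $O(\sqrt{\log\log d})$ term does not come from depth-dependent repetition counts: it comes from replacing the first-stage growth sequence $\Delta_i=2^{2^i}$ by $\Delta_i=2^{2^{i^2}}$ (cf.\ Lemma~\ref{FAlg}), so that the initial upper-bound search halts after only $i_1\le\sqrt{\log\log(2d/\delta)}+1$ expected queries while inflating the subsequent binary-search interval to $\log\log D_1'=\log\log(d/\delta)+O(\sqrt{\log\log(d/\delta)})$ --- i.e., the square root is the optimized trade-off between the first stage's query count and the second stage's search range. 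Your refinement phase matches the paper's last two stages and is fine.
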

\begin{proof}
We first give an algorithm $A$ that asks
$$q'(\delta):=\log\log d +O(\sqrt{\log\log d})+O\left(\frac{1}{\epsilon^2}\log\frac{1}{\delta}\right)+\tilde O\left(\log\frac{1}{\delta}\right)$$
expected number of queries. We then define the following algorithm $B$: With probability $\delta-\delta^c$ output $0$ and with probability $1-(\delta-\delta^c)$ run algorithm $A$ with success probability of $1-\delta^c$.

The expected number of queries that $B$ asks is
$(1-\delta+\delta^c) q'(\delta^c)=q$ and the success probability is $1-\delta$.

We now give algorithm $A$. Algorithm $A$ is the same as the algorithm of Falahatgar et al.~\cite{FalahatgarJOPS16} but with different parameters. Their algorithm runs in 4 stages. In the first stage they give a procedure $\A_{{\rm FACTOR-}d}$ that finds an integer $D_1$ that with probability at least $1-\delta$ satisfies $d\le D_1\le 2d^2\frac{1}{\delta^2}\log\frac{1}{\delta}$. Procedure $\A_{{\rm FACTOR-}d}$ for $i=1,2,\cdots$, generates random queries $Q_i$ where each $j\in [n]$ is in $Q_i$ with probability $1-2^{-1/\Delta_i}$ and is not in $Q_i$ with probability $2^{-1/\Delta_i}$ where $\Delta_i=2^{2^i}$. It then asks the queries $Q_i$ for $i=1,2,\cdots$ and halts on the first query $Q_{i_0}$ that gets answer $0$. Then, it outputs $D_1=2\Delta_{i_0}\log\frac{1}{\delta}$.

Our procedure IMPROVED$\A_{{\rm FACTOR-}d}$ finds an integer $D_1'$ that with probability at least $1-\delta$ satisfies
$$d\le D_1'\le 2\left(\frac{2d}{\delta}\right)^{2^{2\sqrt{\log\log \frac{2d}{\delta}}+1}} \log\frac{1}{\delta}.$$
Procedure IMPROVED$\A_{{\rm FACTOR-}d}$ for $i=1,2,\cdots$, generates random queries $Q'_i$ where each $j\in [n]$ is in $Q'_i$ with probability $1-2^{1/\Delta'_i}$ where $\Delta'_i=2^{2^{i^2}}$, asks the queries $Q_i'$ and halts on the first query $Q'_{i_0}$ that gets answer $0$. Then, it outputs $D'_1=2\Delta_{i_0}\log\frac{1}{\delta}$. The expected number of queries in IMPROVED$\A_{{\rm FACTOR-}d}$ is
\begin{eqnarray}\label{ee1}\sqrt{\log\log D_1'}=O\left(\sqrt{\log\log \frac{d}{\delta}}\right).
\end{eqnarray}
The proof of correctness and the query complexity analysis is the same as in~\cite{FalahatgarJOPS16} and is sketched in the next subsection for completeness.

The second stage of Falahatgar et al. algorithm is the procedure $\A_{{\rm FACTOR}-1/\delta^2}$. The procedure $\A_{{\rm FACTOR}-1/\delta^2}$ is a binary search for $\log d$ in the logarithmic scale
of the interval $[1,D_1]$ - that is, in $[0,\log D_1]$. The procedure with probability at least $1-\delta$ returns $D_2$ such that $\delta^2 d\le D_2\le d/\delta^2$. This procedure is Monte Carlo. The number of queries is $\log\log D_1=\log\log \frac{d}{\delta}+O(1)$. The same procedure with the same analysis and proof of correctness works as well in our algorithm for the interval $[0,\log D_1']$. The procedure $\A_{{\rm FACTOR}-1/\delta^2}$, with probability at least $1-\delta$, returns $D'_2$ such that $\delta^2 d\le D'_2\le d/\delta^2$. The number of queries is
\begin{eqnarray}\label{ee2}\log\log D_1'=\log\log\frac{d}{\delta}+O\left(\sqrt{\log\log \frac{d}{\delta}}\right).
\end{eqnarray}
The third and fourth stages in~\cite{FalahatgarJOPS16} (and in our algorithm) are two procedures that, with an input $D_2'$, with probability at least $1-\delta$, estimate the number of defective items $d$ up to a multiplicative factor of $1\pm \epsilon$ with $O((1/\epsilon^2)\log(1/\delta))+\tilde O(\log(1/\delta))$ number of queries.

The expected number of queries is the sum of expressions in (\ref{ee1}), (\ref{ee2}) and  $O((1/\epsilon^2)\log(1/\delta))+\tilde O(\log(1/\delta))$ which is equal to $q'(\delta)$.\qed
\end{proof}

We note here that the best constant in the $O(\sqrt{\log\log d})$ is $2\sqrt{2}=2.828$ and can be obtained by the sequence $\Delta_i=2^{2^{i^2/2}}$.

\subsubsection{Analysis of the Algorithm.}
The following result is immediate.
\begin{lemma}\label{base}
Let $Q_\Delta$ be a random query where each $j\in [n]$ is in $Q_\Delta$ with probability
$1-2^{-1/\Delta}$ and is not in $Q_\Delta$ with probability $2^{-1/\Delta}$. Let
$I\subseteq [n]$ be a set of defective items of size $d$. Then for any $\Delta$ we have
$$\Pr[Q_\Delta(I)=0]=2^{-\frac{d}{\Delta}}$$ and for $\Delta> d$,
$$\Pr[Q_\Delta(I)=1]=1-2^{-\frac{d}{\Delta}}\le\frac{d}{\Delta}.$$
\end{lemma}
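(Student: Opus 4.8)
The plan is to read off the first identity directly from the independent inclusion of items into $Q_\Delta$, and then to obtain the upper bound on $\Pr[Q_\Delta(I)=1]$ by passing to the complementary event and invoking one elementary inequality.

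First I would observe that, by definition of the random query, $Q_\Delta(I)=0$ is precisely the event $Q_\Delta\cap I=\emptyset$, i.e.\ that no defective item is chosen into $Q_\Delta$. Since each $j\in[n]$ is placed outside $Q_\Delta$ independently with probability $2^{-1/\Delta}$, the events $\{j\notin Q_\Delta\}$ for $j\in I$ are independent, so
$$\Pr[Q_\Delta(I)=0]=\prod_{j\in I}\Pr[j\notin Q_\Delta]=\left(2^{-1/\Delta}\right)^{|I|}=2^{-d/\Delta},$$
which is the claimed identity and holds for every $\Delta$. For the second bound I would use complementation, $\Pr[Q_\Delta(I)=1]=1-2^{-d/\Delta}$, so that it suffices to prove $1-2^{-x}\le x$ for $x=d/\Delta$. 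Writing $2^{-x}=e^{-x\ln 2}$ and applying the standard inequality $e^{-y}\ge 1-y$ with $y=x\ln 2$ gives $2^{-x}\ge 1-x\ln 2$, hence $1-2^{-x}\le x\ln 2\le x$ because $\ln 2<1$. Substituting $x=d/\Delta$ yields $\Pr[Q_\Delta(I)=1]\le (d/\Delta)\ln 2\le d/\Delta$.

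I do not expect any genuine obstacle: beyond the definition of $Q_\Delta$ the only ingredient is the one-line convexity estimate $e^{-y}\ge 1-y$. The hypothesis $\Delta>d$ is not actually needed to derive $1-2^{-d/\Delta}\le d/\Delta$, which holds for every positive $\Delta$; its role is only to force the right-hand side $d/\Delta$ into $(0,1)$, making the bound a nontrivial statement about a probability. If one preferred to avoid the exponential, I would instead set $g(x)=x-1+2^{-x}$, note $g(0)=0$ and $g'(x)=1-2^{-x}\ln 2\ge 1-\ln 2>0$ on $[0,\infty)$, so $g$ is increasing and $g\ge 0$, giving $1-2^{-x}\le x$ directly.
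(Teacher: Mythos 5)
Your proof is correct and is exactly the argument the paper has in mind: the paper gives no proof at all (it states the lemma as ``immediate''), and your route --- independence of the events $\{j\notin Q_\Delta\}$ giving $\Pr[Q_\Delta(I)=0]=\left(2^{-1/\Delta}\right)^{d}=2^{-d/\Delta}$, followed by $1-2^{-x}\le x\ln 2\le x$ via $e^{-y}\ge 1-y$ --- is the standard one-line verification being alluded to. Your side remark that the hypothesis $\Delta>d$ is not needed for the inequality itself, but only makes the bound nontrivial as a probability estimate, is also accurate.
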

Let $\{\Delta_i\}_{i=1}^\infty$ be any sequence of numbers such that, $\Delta_1\ge 1$ and $\Delta_{i+1}/\Delta_i\ge 2$. Consider the algorithm that asks the query $Q_{\Delta_i}$ for $i=1,2,3,\ldots$ and stops on the first query $Q_{\Delta_{i_0}}$ that gets answer $0$. Let $$D=2\Delta_{i_0}\log\frac{2}{\delta}.$$

Since $\Delta_{i-1}\le \Delta_i/2$ and by Lemma~\ref{base},
\begin{eqnarray*}
\Pr[D<d]&=&\Pr\left[\Delta_{i_0}
<\frac{d}{2\log(2/\delta)}\right]\\
&\le&\sum_{i:\Delta_i<d/(2\log(2/\delta))}\Pr[Q_{\Delta_i}(I)=0]\\
&=&\sum_{i:\Delta_i<d/(2\log(2/\delta))}2^{-d/\Delta_i}\le \delta/2.
\end{eqnarray*}

Let $i_1$ be such that $\Delta_{i_1-1}\le 2d/\delta<\Delta_{i_1}$. Then, by Lemma~\ref{base},
\begin{eqnarray*}
\Pr\left[D>2\Delta_{i_1}\log\frac{2}{\delta}\right]&=&
\Pr[\Delta_{i_0}>\Delta_{i_1}]\\
&\le& \Pr[Q_{\Delta_{i_1}}(I)=1]\\
&\le & \frac{d}{\Delta_{i_1}}\le \delta/2.
\end{eqnarray*}
Since, $\Delta_{i+1}/\Delta_i\ge 2$, we have
$$\Pr[\Delta_{i_0}>\Delta_{i_1+k}]\le \frac{d}{\Delta_{i_1+k}}\le\frac{\delta}{2^{k+1}},$$
and therefore the expected number of queries is at most $i_1+2$.

This proves
\begin{lemma}\label{FAlg} Let $\{\Delta_i\}_{i=1}^\infty$ be any sequence of numbers such that, $\Delta_1\ge 1$ and  $\Delta_{i+1}/\Delta_i\ge 2$. Let $i_1$ be such that $\Delta_{i_1-1}\le 2d/\delta<\Delta_{i_1}$. The above algorithm asks at most $i_1+2$ expected number of queries and with probability at least $1-\delta$ outputs $D$ that satisfies $D\ge d$ and $D\le 2\Delta_{i_1}\log (2/\delta)$.

\ignore{Suppose we know some upper bound $D^*$ on $d$. Let $i_2$ be such that $\Delta_{i_2}>D^*$. The algorithm is also a Monte Carlo algorithm that asks at most $i_2$ queries.}
\end{lemma}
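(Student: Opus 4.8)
The plan is to track the random stopping index $i_0$ (the first $i$ with $Q_{\Delta_i}(I)=0$), since the output is the explicit function $D=2\Delta_{i_0}\log(2/\delta)$ of it. Everything will follow from the two one-step estimates recorded in Lemma~\ref{base}: $\Pr[Q_\Delta(I)=0]=2^{-d/\Delta}$, and $\Pr[Q_\Delta(I)=1]\le d/\Delta$ when $\Delta>d$. I would split the correctness guarantee $d\le D\le 2\Delta_{i_1}\log(2/\delta)$ into an undershoot event and an overshoot event, allocate probability $\delta/2$ to each, and finish with a union bound; the query bound is then a separate tail computation.

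For the undershoot, note that $D<d$ is equivalent to $\Delta_{i_0}<d/(2\log(2/\delta))$, which forces every query $Q_{\Delta_i}$ with $\Delta_i$ below that threshold to have answered $0$. A union bound gives $\Pr[D<d]\le\sum_{i:\,\Delta_i<d/(2\log(2/\delta))}2^{-d/\Delta_i}$. The decisive use of the hypothesis $\Delta_{i+1}/\Delta_i\ge 2$ is here: as $i$ decreases the exponent $d/\Delta_i$ at least doubles, so consecutive summands are at least squared, the series is dominated by its largest term, and that term is already below $(\delta/2)^2$. Hence the sum is at most $\delta/2$.

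For the overshoot, I would use the index $i_1$ with $\Delta_{i_1-1}\le 2d/\delta<\Delta_{i_1}$. Then $D>2\Delta_{i_1}\log(2/\delta)$ is exactly the event $\Delta_{i_0}>\Delta_{i_1}$, which in particular requires $Q_{\Delta_{i_1}}(I)=1$; since $\Delta_{i_1}>2d/\delta>d$, the second estimate of Lemma~\ref{base} bounds this by $d/\Delta_{i_1}<\delta/2$. Combining the two failure events proves the claimed interval with probability at least $1-\delta$. For the expected number of queries I would push the same tail estimate further: the doubling hypothesis gives $\Pr[\Delta_{i_0}>\Delta_{i_1+k}]\le d/\Delta_{i_1+k}\le\delta/2^{k+1}$, so the expected number of queries asked beyond index $i_1$ is at most $\sum_{k\ge 0}\delta/2^{k+1}$, a constant, and the total expectation is at most $i_1+2$.

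Finally, for the Monte Carlo variant I would truncate the loop at index $i_2$, where $\Delta_{i_2}>D^*\ge d$, so the worst-case number of queries is exactly $i_2$. The only new way to fail is to reach the cap without ever seeing a $0$, and that probability is at most $\Pr[Q_{\Delta_{i_2}}(I)=1]\le d/\Delta_{i_2}$ by Lemma~\ref{base}; moreover, because $\Delta_{i_2}>d$, a fallback output based on $\Delta_{i_2}$ still exceeds $d$, so truncation never produces an undershoot. I expect the delicate step to be the undershoot summation: the whole lower estimate hinges on converting the geometric growth of the $\Delta_i$ into super-geometric decay of $2^{-d/\Delta_i}$ and pinning the constants so that the dominant term really is below $\delta/2$.
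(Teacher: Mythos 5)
Your proof is correct and takes essentially the same route as the paper: the same undershoot/overshoot decomposition with a $\delta/2$ budget for each via Lemma~\ref{base}, the identical tail estimate $\Pr[\Delta_{i_0}>\Delta_{i_1+k}]\le \delta/2^{k+1}$ for the expected query count, and truncation at index $i_2$ for the Monte Carlo claim. The only difference is that you make explicit the super-geometric decay argument (terms squaring as $i$ decreases, dominant term below $(\delta/2)^2$) that the paper leaves implicit in the undershoot summation.
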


Now if we take $\Delta_i=2^{2^{i^2}}$ then $i_1\le \sqrt{\log\log (2d/\delta)}+1$ and
$$\Delta_{i_1}\le \left(\frac{2d}{\delta}\right)^{2^{2\sqrt{\log\log \frac{2d}{\delta}}+1}}.$$ Therefore
$$d\le D\le 2\left(\frac{2d}{\delta}\right)^{2^{2\sqrt{\log\log \frac{2d}{\delta}}+1}} \log\frac{2}{\delta}.$$
This gives the result in Theorem~\ref{TH7}.
\subsection{A Randomized Monte Carlo Algorithm}
In this section, we use a randomized Monte Carlo algorithm.

\ignore{In Lemma~\ref{FAlg}, if we take the sequence $\Delta_1=1$ and $\Delta_i=2^{\Delta_{i-1}}$ then  $\Delta_{i_1}\le 2^{2d/\delta}$, the expected number of queries is $\log^*(d/\delta)$ and the output $D$ satisfies
$$d\le D\le 2^{2d/\delta+1}\log\frac{2}{\delta}.$$
The advantage of this algorithm is that, by Lemma~\ref{FAlg}, it is also a randomized Monte Carlo algorithm that asks at most $i_2=\log^*n$ queries.
Now we can narrow the range and keep the worst case query complexity small by choosing the sequences $\Delta_i=2^{2^{2^{2^i}}}$ then $\Delta_i=2^{2^{2^i}}$ then $\Delta_i={2^{2^{i^2}}}$ and then runs the last 3 stages of Falahatgar et al. algorithm~\cite{FalahatgarJOPS16}.

The following table gives the parameters in each stage

\begin{center}
\begin{tabular}{|l|l|c|c|c|}
$\Delta_i=$ & $i_1$ & $D^*$ & $\Delta_{i_1}=\frac{D}{2\log(2/\delta
)}\le$ & $i_2$  \\
 \hline\hline
$2^{\Delta_{i-1}}$&$\log^*(d/\delta)$&$n$&$2^{2d/\delta}$&$\log^*n$\\
\hline
$2^{2^{2^{2^i}}}$&$\log^{[4]}\frac{2d}{\delta}+1$&$2^{2d/\delta+1}\log\frac{2}{\delta}$
&$2^{2^{(\log^{[3]}\frac{2d}{\delta})^2}}$ &$\log^{[3]}\frac{2d}{\delta}$\\
\hline
${2^{2^{2^i}}}$&$\log^{[3]}\frac{2d}{\delta}+1$&
$2^{2^{(\log^{[3]}\frac{2d}{\delta})^2}+1}\log\frac{2}{\delta}$&$2^{(\log \frac{2d}{\delta})^2}$&$2\log^{[4]}\frac{2d}{\delta}$\\
\hline
${{2^{2^{i^2}}}}$&$\sqrt{\log^{[2]}\frac{2d}{\delta}}+1$&$2^{(\log \frac{2d}{\delta})^2+1}\log\frac{2}{\delta}$&$ \left(\frac{2d}{\delta}\right)^{2^{2\sqrt{\log\log \frac{2d}{\delta}}+1}}$&$\log^{[3]}\frac{2d}{\delta}$\\
\hline
\\
\end{tabular}
\\  Here $\log^{[k]}n=\log\log^{[k-1]}n$ and $\log^{[1]}n=\log n$.
\end{center}

This gives the following result}
We now prove

\begin{theorem}\label{THl} There is a randomized Monte Carlo algorithm that asks
$$\log\log n +O\left(\frac{1}{\epsilon^2}\log\frac{1}{\delta}\right)+\tilde O\left(\log\frac{1}{\delta}\right)$$
queries and with probability at least $1-\delta$ estimates the number of defective items $d$ up to a multiplicative factor of $1\pm \epsilon$.
\end{theorem}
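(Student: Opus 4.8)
The plan is to turn the (non-Monte-Carlo) randomized algorithm of Theorem~\ref{TH7} into a Monte Carlo one by prefixing it with a cheap procedure that manufactures a hard upper bound on $d$, so that every stage can be equipped with the worst-case cap guaranteed by Lemma~\ref{FAlg}. Recall that Lemma~\ref{FAlg} says the search driven by a sequence $\{\Delta_i\}$ costs $i_1+2$ expected queries, but that if we are handed an upper bound $D^*$ on $d$ it halts after at most $i_2$ queries, where $\Delta_{i_2}>D^*$. The $\log^* n$ term will come from running this search with the tower sequence and the trivial bound $D^*=n$; the $\log\log d+O(\sqrt{\log\log d})$ term will come from the binary-search stage $\A_{{\rm FACTOR}-1/\delta^2}$ of~\cite{FalahatgarJOPS16}, whose depth is deterministic once its interval is fixed.

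First I would run the search with $\Delta_1=1$, $\Delta_i=2^{\Delta_{i-1}}$ and the trivial upper bound $D^*=n$. By Lemma~\ref{FAlg} this asks at most $i_2=\log^* n$ queries and, with probability $1-\delta$, returns $D^{(1)}$ with $d\le D^{(1)}\le 2^{2d/\delta+1}\log(2/\delta)$. Then I would cascade three further searches, each using the output of the previous one as its $D^*$ and a successively faster sequence, namely $\Delta_i=2^{2^{2^{2^i}}}$, then $\Delta_i=2^{2^{2^i}}$, then $\Delta_i=2^{2^{i^2}}$, exactly as in the parameter table. Reading the $i_2$ column, the worst-case costs of these three stages are $\log^{[3]}(2d/\delta)$, $2\log^{[4]}(2d/\delta)$ and $\log^{[3]}(2d/\delta)$, so their sum is $O(\log^{[3]}(2d/\delta))$, negligible next to $\log^* n$. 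The purpose of the cascade is that $\log\log D^{(k)}$ shrinks from $\approx\log d$ after the tower stage down to $\log\log d+O(\sqrt{\log\log d})$ after the last stage, since that final search is precisely IMPROVED$\A_{{\rm FACTOR-}d}$ and yields $D^{(4)}\le 2(2d/\delta)^{2^{2\sqrt{\log\log(2d/\delta)}+1}}\log(2/\delta)$, whence $\log\log D^{(4)}=\log\log d+O(\sqrt{\log\log d})$.

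With the tight bound $D^{(4)}$ in hand I would run the last three stages of the algorithm of Theorem~\ref{TH7}: the binary search $\A_{{\rm FACTOR}-1/\delta^2}$ for $\log d$ in $[0,\log D^{(4)}]$, whose deterministic depth is $\log\log D^{(4)}=\log\log d+O(\sqrt{\log\log d})$ and which returns $D_2$ with $\delta^2 d\le D_2\le d/\delta^2$, followed by the two refinement stages that turn $D_2$ into a $1\pm\epsilon$ estimate using $O((1/\epsilon^2)\log(1/\delta))$ queries. Summing the worst-case costs gives $\log^* n+O(\log^{[3]}(2d/\delta))+\log\log d+O(\sqrt{\log\log d})+O((1/\epsilon^2)\log(1/\delta))$, which is the claimed bound. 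Each of the $O(1)$ stages I would run at confidence $1-\delta/c$ for a suitable constant $c$, so that a union bound leaves overall failure probability at most $\delta$; on the good event every stage's input $D^*$ genuinely upper-bounds $d$, so each appeal to Lemma~\ref{FAlg} and each Falahatgar stage behaves as analysed, while the $i_2$ caps keep every stage's query count bounded deterministically.

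The main obstacle is the tension between the worst-case cap and the narrowing rate. The tower stage buys a hard $\log^* n$ cap but only an exponentially loose estimate ($\log\log D^{(1)}\approx\log d$), so I must drive $\log\log D^{(k)}$ all the way down to $\log\log d$ through the cascade while keeping every cap $i_2$ strictly below the $\log^* n$ budget; verifying that the three intermediate sequences achieve this, i.e.\ that each row of the table simultaneously shrinks $\log\log D^{(k)}$ enough and costs only an iterated logarithm, is the technical heart of the argument. A secondary subtlety is that each stage's Monte Carlo correctness presupposes that its input $D^*$ really upper-bounds $d$, which is why the per-stage confidence and the union bound are needed; the $\log\log d$ term itself is reassuringly clean because the binary search has fixed depth once $D^{(4)}$ is chosen.
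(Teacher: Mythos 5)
Your proposal is correct and follows essentially the same route as the paper: run the Lemma~\ref{FAlg} search with the tower sequence $\Delta_i=2^{\Delta_{i-1}}$ capped by $D^*=n$ to get the $\log^* n$ term, cascade the same three successively faster sequences from the table (each capped by the previous output as $D^*$) to shrink the upper bound until $\log\log D^{(4)}=\log\log d+O(\sqrt{\log\log d})$, and finish with the last three stages of the Falahatgar et al.\ algorithm. One negligible imprecision: the cascade's cost $O(\log^{[3]}(2d/\delta))$ is not dominated by $\log^* n$ in general (e.g.\ when $d$ is close to $n$), but it is $o(\sqrt{\log\log d})+O(\log(1/\delta))$ and hence absorbed by the other terms of the claimed bound, so the conclusion stands.
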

\begin{proof}
    We start from the second procedure of the Falahatgar et al. algorithm (the binary search) that asks $\log\log n$ queries. We get, with probability at least $1-\delta/2$, an integer $D$ such that $\delta^2d\le D\le d/\delta^2$. Then use the two next procedures of their algorithm that ask $O((1/\epsilon^2)\log(1/\delta))+\tilde O(\log(1/\delta))$ queries with a success probability of $1-\delta/2$.\qed  
\end{proof}

\ignore{Note: The above stages can even start from a much slower function. For example $\log^{**}n$ that is defined as $\log^{**}\alpha=1$ for $\alpha\le 2$ and $\log^{**}n=1+\log^{**}(\log^*n)$.}

\ignore{
\section{Open Problems}
The results in the table in Subsection~\ref{sectionTable} suggest the following open problems
\begin{enumerate}
\item Prove a lower bound $\Omega((1/\epsilon^2)\log (1/\delta))$ or find an randomized algorithm that asks $(1-\delta)\log\log d+O((1/\epsilon)\log (1/\delta))$ expected number of queries.
\item Prove the lower bound $\Omega(d)$ for number of queries in any randomized Monte Carlo algorithm when $n\to\infty$. A randomized Monte Carlo algorithm that asks $O(d\log d+d\log(1/\delta))$ queries follows from~\cite{Cheng11}.
\end{enumerate}}

\newpage
\section{Appendix}
In this Appendix we give a simple proof of Theorem~\ref{TH4}.

\noindent
{\bf Theorem \ref{TH4} .} {\em Let $A$ be a randomized adaptive algorithm that estimates $d$ up to  multiplicative factor of $1/4$ with probability at least $1-\delta$. The expected number of queries of $A$
is at least
$$(1-\delta)(\log\log d-\log\log\log d-2) $$
}
\begin{proof} Let $A(r)$ be an adaptive algorithm that estimates $d$ up to  multiplicative factor of $1/4$ with probability at least $1-\delta$. Let $q(d)$ be the expected number of queries of $A(r)$.
Define a sequence of sets $I_1=[1],I_2=[2],\ldots, I_t=[2^t]$ where $2^t\le d$ and $2^{t+1}>d$. Then $t=\lfloor \log d\rfloor$. We restrict the inputs of $A$ to be only $I_j$ for some $j=1,\ldots,t$ and force $A$ to halt if it asks more than $q(d)/(1-\delta-\eta)$ queries where $\eta>0$ will be determined later. 
This new algorithm, denoted by $B$, is a Monte Carlo algorithm that finds exactly the size of $|I_j|$ with probability at least $1-(\delta+(1-\delta-\eta))=\eta$ and asks at most $q(d)/(1-\delta-\eta)$  queries. 
Therefore by Theorem~\ref{TH4p} (see (\ref{jfjd})), $q(d)/(1-\delta-\eta)\ge \log \log d+\log \eta$ and therefore for $\eta=(\ln 2)(1-\delta)/\log\log d$ we get
\begin{eqnarray*}
q(d)&\ge& (1-\delta-\eta)(\log\log d+\log \eta)\\
&\ge& (1-\delta)(\log\log d-\log\log\log d-2).\qed
\end{eqnarray*}
\end{proof}

\end{document}